\renewcommand{\paragraph}{\roman{paragraph}}
\renewcommand\title[1]{\gdef\@title{\reset@font\Large\bfseries #1}}
\renewcommand\section{\@startsection {section}{1}{\z@}%
                                   {-3.5ex \@plus -1ex \@minus -.2ex}%
                                   {2.3ex \@plus.2ex}%
                                   {\normalfont\large\bfseries}}
\renewcommand\subsection{\@startsection{subsection}{2}{\z@}%
                                     {-3ex\@plus -1ex \@minus -.2ex}%
                                     {1.5ex \@plus .2ex}%
                                     {\normalfont\normalsize\bfseries}}
\renewcommand\subsubsection{\@startsection{subsubsection}{3}{\z@}%
                                     {-2.5ex\@plus -1ex \@minus -.2ex}%
                                     {1.5ex \@plus .2ex}%
                                     {\normalfont\normalsize\bfseries}}
\def\@runningauthor{}\newcommand{\runningauthor}[1]{\def\runningauthor{#1}}
\def\@runningtitle{}\newcommand{\runningtitle}[1]{\def\runningtitle{#1}}
\renewcommand{\ps@plain}{%
\renewcommand{\@evenhead}{\footnotesize\scshape \hfill\runningauthor\hfill}
\renewcommand{\@oddhead}{\footnotesize\scshape \hfill\runningtitle\hfill}}
\newcommand{\Z}{\mathbb{Z}}
\newcommand{\F}{\mathbb{F}}
\g@addto@macro\bfseries{\boldmath}
\theoremstyle{plain}
\newtheorem{theorem}{Theorem}[section]
\newtheorem{lemma}[theorem]{Lemma}
\newtheorem{cor}[theorem]{Corollary}
\newtheorem{prop}[theorem]{Proposition}
\theoremstyle{definition}
\newtheorem{definition}[theorem]{Definition}
\newtheorem{example}[theorem]{Example}
\theoremstyle{remark}
\newtheorem{remark}[theorem]{Remark}
\runningauthor{}
\date{}
\begin{document}
\title{Characterization of Plotkin-optimal two-weight codes over finite chain rings and related applications\thanks{This research is supported by the National Natural Science Foundation of China (12071001 and 12201170) and the Natural Science Foundation of Anhui Province (2108085QA03).}}
\author{Shitao Li\thanks{Shitao Li is with the School of Mathematical Sciences, Anhui University, Hefei, 230601, China, email: lishitao0216@163.com.},
Minjia Shi\thanks{Minjia Shi is with the Key Laboratory of Intelligent Computing and Signal Processing, Ministry of Education, School of Mathematical Sciences, Anhui University, Hefei, 230601, China, and also the State Key Laboratory of Integrated Services Networks, Xidian University, Xi'an, 710071, China, email: smjwcl.good@163.com.}
}
\date{}
    \maketitle
\begin{abstract}
Few-weight codes over finite chain rings are associated with combinatorial objects such as strongly regular graphs (SRGs), strongly walk-regular graphs (SWRGs) and finite geometries, and are also widely used in data storage systems and secret sharing schemes. The first objective of this paper is to characterize all possible parameters of Plotkin-optimal two-homogeneous weight regular projective codes over finite chain rings, as well as their weight distributions. We show the existence of codes with these parameters by constructing an infinite family of two-homogeneous weight codes. The parameters of their Gray images have the same weight distribution as that of the two-weight codes of type SU1 in the sense of Calderbank and Kantor (Bull Lond Math Soc 18: 97-122, 1986). Further, we also construct three-homogeneous weight regular projective codes over finite chain rings combined with some known results. Finally, we study applications of our constructed codes in secret sharing schemes and graph theory. In particular, infinite families of SRGs and SWRGs with non-trivial parameters are obtained.
\end{abstract}
{\bf Keywords:} {Homogeneous weight, Plotkin-type bound, regular and projective code, Gray map, secret sharing scheme, graph theory.}\\

\noindent{\bf Mathematics Subject Classification} 94B05, 05E30

\section{Introduction}
Since the 1970s, a one-to-one correspondence between two-weight projective codes over prime fields and some strongly regular graphs (SRGs) was discovered by Delsarte \cite{DM-Delsarte}. In 2008, Byrne, Greferath and Honold \cite{DCC-2-SRGs} extended
this classical result to finite Frobenius rings, and showed that a regular projective code over a finite Frobenius ring with two nonzero homogeneous weights also determined an SRG. Recently, Shi $et~ al.$ \cite{IT-01,Shi-S-Sole} studied the relations between the dual codes of two-homogeneous weight regular projective codes and SRGs over $\Z_{2^k}$ and $\Z_{p^m}$. For more connections between two-weight codes and SRGs, one can refer to \cite{FFA-2-weight}. Some SRGs were constructed from regular projective two-weight codes over finite chain rings \cite{SWP,Tang-DCC,JAC-Shi,FFA-zhu-liao}.

Strongly walk-regular graphs (SWRGs) were introduced in \cite{JCTA-SWRG} as a natural generalization of SRGs, where paths of length $2$ are replaced by paths of length $s$. Shi and Sol\'e \cite{DCC-shi-sole} connected together the two notions of triple sum sets (TSSs) and SWRGs over finite fields, and constructed many infinite families of SWRGs as coset graphs of the dual codes of three-weight projective codes over finite fields whose weights satisfy a certain equation. This led Kiermaier $et~al.$ to classify the feasible parameters of these codes in the binary and ternary case for medium size code lengths \cite{DCC-Kiermaier}. Recently, the authors \cite{Shi-GC} extended these results to finite chain rings, and classified short length three-weight codes leading to SWRGs over $\Z_4$ and $\F_2+u\F_2$ with $u^2=0$.

Besides some connections between few-weight codes and combinatorial structures, few-weight codes also have applications in data storage systems and secret sharing schemes \cite{SS-Ding,ass-Calderbank,ACM-Shamir,SSS-Blakley}. Recently, there were some progress in the study of two-weight codes over finite rings \cite{BAMS-Tang,li,SW,Shi-Xu-Yang,Tang-DCC,SWP,Liu-IT}. Particularly, Shi and Wang \cite{SW} studied the properties and constructions of two-Lee weight projective codes over $\Z_4$, and their linearity was analyzed in \cite{Shi-Xu-Yang}. Later, Shi $et~al.$ \cite{SWP} constructed two infinite families of two-Lee weight codes over $\Z_4$ by their generator matrices, which generalized the results in \cite{SW}. Note that all two-Lee weight codes over $\Z_4$ constructed in \cite{SW,Shi-Xu-Yang,SWP} are Plotkin-optimal, regular and projective. More recently, an interesting result is that Tang and Suprijanto \cite{Tang-DCC} characterized possible parameters of Plotkin-optimal two-Lee weight regular projective codes over $\Z_4$, together with their weight distributions. They also showed the existence of codes with these parameters by unifying the construction method in \cite{SWP}.

Inspired and motivated by the works listed above, we replace the ring $\Z_4$ by a finite chain ring. The purpose of this paper is to characterize possible parameters and weight distributions of Plotkin-optimal two-homogeneous weight regular projective codes over finite chain rings. We show the existence of codes with these parameters by a generalized construction method. Their Gray images have the same weight distribution as that of the two-weight codes of type SU1 in the sense of Calderbank and Kantor \cite{CK}. Thus they produce the same SRGs as those given by the SU1 codes, but described as Cayley graphs on a different abelian group. As a by-product, we construct several families of three-homogeneous weight regular projective codes over finite chain rings combined with some known results. As an application, infinite families of SWRGs are obtained.

The paper is organized as follows. In Section 2, we give some notations and definitions. In Section 3, we study some properties of Plotkin-optimal regular projective codes over finite chain rings. In Section 4, we focus on Plotkin-optimal regular projective two-homogeneous weight codes over finite chain rings. In Section 5, we study applications of our constructed codes in secret sharing schemes and graph
theory. In Section 6, we conclude the paper.

\section{Preliminaries}
\subsection{Finite chain rings}
We begin with the definition and some properties of finite chain rings following mainly \cite{finite-ring}.
\begin{definition}
A finite commutative ring with $1\neq 0$ is called a {\em finite chain ring} if its ideals are linearly ordered by inclusion.
\end{definition}
A simple example of a finite chain ring is the ring $\Z_{p^m}$ of integers modulo $p^m$, for some prime $p$ and integer $m\geq 1$.
It is well-known that if $R$ is a finite chain ring, then $R$
is a principal ideal ring and it has a unique maximal ideal $\langle \theta\rangle$. Its chain of ideals is
$$\{0\}=\langle \theta^m\rangle\subsetneq \langle \theta^{m-1}\rangle\subsetneq\cdots \subsetneq \langle \theta\rangle \subsetneq R,$$
where $\langle \theta^i\rangle$ is the ideal of $R$ generated by $\theta^i$ for $1\leq i\leq m.$ The integer $m$ is called the {\em depth} of $R$. It follows that $R/ \langle \theta\rangle$ is a finite field, which is called the {\em residue field} of $R$ and we denote it by $\F_q$. Let $R^\times$ denote the set of all the units in $R$. Then $R^\times=R\backslash \langle \theta\rangle$. It follows that $|R|=q^m$ and $|R^\times|=q^m-q^{m-1}$. Moreover,
$|\langle \theta^i\rangle|=q^{m-i}$ for $1\leq i\leq m.$

Note that there is the natural surjection $\nu:~R\longrightarrow \F_q$. Define the {\em Teichm$\ddot{ u}$ller set} $T$ as a set of $q$ elements of $R$ such that $T \equiv \F_q~(\bmod ~\langle \theta\rangle)$. Then there is a unique $(a_0,a_1,\ldots,a_{m-1})\in T^m$ such that the {\em $\theta$-adic expansion} of an arbitrary element $a\in R$ is
$$a=a_0+a_1\theta+\cdots+a_{m-1}\theta^{m-1}.$$
We call $a^{(i)}:=\nu(a_i)\in \F_q$ the {\em $\theta$-adic components} of $a$.
It is easy to see that $a\in R^\times$ if and only if $a_0\neq 0$.

\subsection{Linear codes over finite chain rings}
Let $R$ be a finite chain ring of the depth $m$ with the residue field $\F_q$ and the maximum idea $ \langle \theta\rangle$.
A {\em linear code} over $R$ of length $n$ is an $R$-submodule of $R^n$. A matrix $G$ is called a {\em generator matrix} of the linear code $C$ if the rows of $G$ generate $C$ as an $R$-module. A matrix $H$ is called a {\em parity-check matrix} of the linear code $C$ if $H$ is a generator matrix of $C^\perp$. By \cite{AAECC}, any linear code $C$ of length $n$ over $R$ is permutation equivalent to a linear code over $R$ with the following generator matrix
$$
    G=\begin{bmatrix}
    \begin{array}{cccccc}
    I_{k_0} & A_{0,1} & A_{0,2} & \cdots & A_{0,m-1} & A_{0,m}\\
    0 & \theta I_{k_1} & \theta A_{1,2} & \cdots & \theta A_{1,m-1} & \theta A_{1,m}\\
    0 & 0 & \theta^2I_{k_2} & \cdots & \theta^2A_{2,m-1} & \theta^2A_{2,m}\\
    \cdots & \cdots& \cdots & \cdots & \cdots & \cdots\\
    0 & 0 & 0 & \cdots & \theta^{m-1}I_{k_{m-1}} & \theta^{m-1}A_{m-1,m}
    \end{array}
    \end{bmatrix},
    $$
where $I_{k_i}$ is the identity matrix of order $k_i$, $A_{i,j}$ is a matrix over $R$ and the columns are grouped into blocks of sizes $k_0,\ldots,k_{m-1},n-\sum_{i=0}^{m-1}k_i$, respectively. Then $|C|=q^k$, where $k=\sum_{i=0}^{m-1}(m-i)k_i$ is the {\em $q$-dimension} of $C$, and we say $C$ has {\em type} $1^{k_0}q^{k_1}\cdots (q^{m-1})^{k_{m-1}}$. A code of type $1^{k_0}q^{k_1}\cdots (q^{m-1})^{k_{m-1}}$ over $R$ of length $n$ is denoted as an $[n;k_0,k_1,\ldots,k_{m-1}]_q$ code. A linear $[n;k_0,k_1,\ldots,k_{m-1}]_q$ code over $R$ is {\em free} if and only if $k_1=k_2=\cdots=k_{m-1}=0.$ 

\begin{definition}
Let $R$ be a finite ring. A function $\omega:~R\rightarrow \mathbb{R}$ is called a {\em homogeneous weight}, if $\omega(0)=0$ and
\begin{itemize}
  \item [{\rm (1)}] if $Rx=Ry$ then $\omega(x)=\omega(y)$ for all $x,y\in R$;
  \item [{\rm (2)}] there exists a real number $\gamma\neq 0$ such that $\sum_{y\in Rx}\omega(y)=\gamma |Rx|$ for all $x\in R \backslash \{0\}.$
\end{itemize}
\end{definition}
By \cite{C}, a homogeneous weight exists for all finite commutative rings, and is uniquely determined up to the normalization factor $\gamma$.
In the sequel we will denote the homogeneous weight by ${\rm wt_{hom}}$. If the chosen normalization constant $\gamma=(q-1)q^{m-2}$, then the homogeneous weight of an element of $R$ is defined as
$${\rm wt_{hom}}(x)=\left\{
\begin{array}{ll}
0,& {\rm if}\ x=0, \\
q^{m-1}, &{\rm if}\ x\in \langle \theta^{m-1}\rangle\backslash \{0\},\\
(q-1)q^{m-2},&{\rm if} ~x \in R\backslash \langle \theta^{m-1}\rangle.
\end{array}
\right.
$$
In particular, for $R=\Z_4$, the homogeneous weight coincides with the classical Lee
weight of \cite{HKCSS}.
For ${\bf x}=(x_1,\ldots,x_n)\in R^n$, its homogeneous weight is ${\rm wt_{ hom}}({\bf x})=\sum_{i=1}^n {\rm wt_{hom}}(x_i)$.
The {\em homogeneous distance} between two vectors ${\bf x}$ and ${\bf y}$ is ${\rm d_{hom}}({\bf x},{\bf y})={\rm wt_{hom}({\bf x}-{\bf y})}$.
The {\em minimum homogeneous distance} of a linear code $C$ over $R$ is the smallest homogeneous distance between distinct codewords of $C$. Let $(A_0,A_1,\ldots, A_{nq^{m-1}})$ be the homogeneous weight distribution of a code $C$, where $A_i = \{{\bf c} \in C~|~{\rm wt_{hom}}({\bf c}) = i\}$.
For $\textbf x=(x_1,x_2,\cdots,x_n)$ and $\textbf y=(y_1,y_2,\cdots,y_n)\in R^n$, the {\em inner product} of ${\bf x}$ and ${\bf y}$ is defined as
$$\textbf x \cdot \textbf y=\sum _{i=1}^n x_iy_i.$$
The {\em dual code} of a linear code $C$ of length $n$ over $R$ is defined as
$$C^{\perp} =\{\textbf y\in R^n ~|~ \textbf x \cdot \textbf y=0, \forall\ \textbf x\in C \}.$$
An {\em $N$-homogeneous weight code} is a code such that the cardinality of the set of nonzero homogeneous weights is $N$.

\begin{definition}
Let $C$ denote a linear code over $R$ of length $n$ with the $\ell \times n$ generator matrix $G=[{\bf g}_1~|~{\bf g}_2~|~\cdots~|~{\bf g}_n]$. The code $C$ is called
\begin{itemize}
  \item [(1)] {\em Proper}: if ${\rm wt_{hom}}({\bf c}) = 0$ implies ${\bf c} = {\bf 0}$ for all ${\bf c} \in C$;
  \item [(2)] {\em Regular}: if $\{{\bf x}\cdot {\bf g}_i~|~{\bf x}\in R^\ell\}=R$ for each $i=1,2,\ldots,n$;
  \item [(3)] {\em Projective}: if ${\bf g}_iR\neq {\bf g}_jR$ for any pair of distinct coordinates $i,j\in\{1,2,\ldots,n\}$.
\end{itemize}
\end{definition}

\begin{remark}
Since ${\rm wt_{hom}}( x)>0$ for all $x\in R\backslash\{0\}$, all linear codes over $R$ are proper. The regularity reduces to the condition that every column of a generator matrix contains at least one unit. For a $\Z_4$-code to be both regular and projective is equivalent to $d^\perp\geq 3$, where $d^\perp$ denotes the dual Lee distance.
\end{remark}

\subsection{The Gray map for the homogeneous weight} Let $R$ be a finite chain ring of the depth $m$ with the residue field $\F_q$ and the maximum idea $ \langle \theta\rangle$. Next, we recall the Gray map for the homogeneous weight over the finite chain ring $R$ \cite{GS-IT}. Let ${\bf u}$ and ${\bf v}$ be two vectors of $\F_q^q$ such that ${\bf u}$ lists all elements of $\F_q$ whereas ${\bf v}$ is the all-one vector. Let
$${\bf c}_i=({\bf v}+\delta_{i,0}({\bf u-v}))\otimes ({\bf v}+\delta_{i,1}({\bf u-v}))\otimes ({\bf v}+\delta_{i,m-2}({\bf u-v}))
$$
for $i=0,1,\ldots,m-1$, where $\delta_{i,j}$ denotes the Kronecker symbol and $\otimes$ is the tensor product (expanded from right to left) over $\F_q$. For example, set
${\bf u}=(u_1,\ldots,u_q)$, we have ${\bf u}\otimes {\bf v}=(u_1{\bf v},\ldots,u_q{\bf v}).$
Then ${\bf v}_1,{\bf v}_2,\ldots,{\bf v}_{m-1}$ span the first-order Reed-Muller $[q^{m-1},m,(q-1)q^{m-2}]$ code over $\F_q$. The {\em Gray map} $\phi$ is defined by
\begin{align*}
  \phi: R \longrightarrow& ~\F_q^{q^{m-1}}, \\
  a \longmapsto& ~a^{(0)}{\bf c}_0+a^{(1)}{\bf c}_1+\cdots+a^{(m-1)}{\bf c}_{m-1}.
\end{align*}
This map can be extended to $\Phi:~R^n\longrightarrow~\F_q^{q^{m-1}n}$ naturally. It has been shown \cite{GS-IT} that $\Phi$ is a distance-preserving map from ($R^n$, ${\rm d_{hom}}$) to ($\F_q^{q^{m-1}n}$, ${\rm d_H}$), where ${\rm d_H}$ denotes the Hamming distance. The definition of Hamming distance refers to \cite{Huffman}. An $(n,M,d)$ code $C$ over $\F_q$ is a set of $\F_q^n$ such that $|C|=M$ and $C$ has the minimum Hamming distance $d$.
In particular, for $R=\Z_{2^m}$, the Gray map is the same as Carlet's Gray map \cite{IT-Carlet}.

\section{Plotkin-optimal regular projective codes over finite chain rings}

In this section, let $R$ be a finite chain ring of the depth $m$ with the residue field $\F_q$ and the maximum idea $ \langle \theta\rangle$. Then $R^*=R\backslash \langle\theta\rangle$ and it follows that $|R^*|=q^m-q^{m-1}$.

\begin{theorem}(Plotkin-type homogeneous distance bound)
If $C$ is a linear $[n;k_0,k_1,\ldots,k_{m-1}]_q$ code with the minimum homogeneous distance ${\rm d_{hom}}(C)$ over $R$, then
$${\rm d_{hom}}(C)\leq \left\lfloor\frac{(q^{m-1}-q^{m-2})n|C|}{|C|-1}\right\rfloor.$$
\end{theorem}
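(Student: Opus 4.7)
The plan is to prove this by the standard Plotkin-type double counting argument on $S := \sum_{{\bf c}\in C} {\rm wt_{hom}}({\bf c})$, using the defining property of the homogeneous weight (with normalization $\gamma=(q-1)q^{m-2}$) to evaluate the contribution of each coordinate.

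First, I would fix a coordinate $i\in\{1,\ldots,n\}$ and consider the projection $\pi_i:C\to R$ sending ${\bf c}\mapsto c_i$. Since $\pi_i$ is an $R$-module homomorphism, its image $C_i\subseteq R$ is a submodule, and because $R$ is a finite chain ring, $C_i=\langle\theta^{j_i}\rangle$ for some $j_i\in\{0,1,\ldots,m\}$ (with the convention $\langle\theta^m\rangle=\{0\}$). Every element of $C_i$ is hit exactly $|C|/|C_i|$ times by $\pi_i$, so
\begin{equation*}
\sum_{{\bf c}\in C}{\rm wt_{hom}}(c_i)\;=\;\frac{|C|}{|C_i|}\sum_{y\in C_i}{\rm wt_{hom}}(y).
\end{equation*}
If $C_i\neq\{0\}$ then $C_i=R\theta^{j_i}$ with $\theta^{j_i}\neq 0$, and the homogeneous weight axiom (2) gives $\sum_{y\in C_i}{\rm wt_{hom}}(y)=(q-1)q^{m-2}|C_i|$; if $C_i=\{0\}$ the sum is $0$. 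In either case the coordinate contribution is at most $(q-1)q^{m-2}|C|=(q^{m-1}-q^{m-2})|C|$.

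Summing over all $n$ coordinates yields the upper bound $S\le n(q^{m-1}-q^{m-2})|C|$. On the other hand, the $|C|-1$ nonzero codewords each contribute at least ${\rm d_{hom}}(C)$ to $S$, giving $S\ge (|C|-1)\,{\rm d_{hom}}(C)$. Combining the two inequalities and using that ${\rm d_{hom}}(C)$ is an integer produces
\begin{equation*}
{\rm d_{hom}}(C)\;\le\;\left\lfloor\frac{(q^{m-1}-q^{m-2})n|C|}{|C|-1}\right\rfloor,
\end{equation*}
as required.

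There is no real obstacle here: the proof is a clean generalization of the classical Plotkin bound, and the only ingredient beyond double counting is the identity $\sum_{y\in Rx}{\rm wt_{hom}}(y)=\gamma|Rx|$ for $x\neq 0$, which is built into the definition of the homogeneous weight. The mildly delicate point is recognizing that even without the regularity hypothesis one controls $\sum_{{\bf c}\in C}{\rm wt_{hom}}(c_i)$ uniformly in $j_i$, because the factor $|C|/|C_i|$ in the fiber size cancels the $|C_i|$ in the weight-sum identity; so the bound holds for \emph{every} linear $[n;k_0,\ldots,k_{m-1}]_q$ code over $R$.
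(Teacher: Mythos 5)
Your proof is correct, but it takes a genuinely different route from the paper. The paper passes through the Gray map: it observes that $\Phi(C)$ is a $(q^{m-1}n,|C|,{\rm d_{hom}}(C))$ code over $\F_q$ and then invokes the classical Plotkin bound for the Hamming metric from Huffman--Pless, splitting into two cases according to whether ${\rm d_{hom}}(C)$ exceeds the Plotkin threshold for the image code. You instead run the double-counting argument directly in the homogeneous metric: the key step is that for each coordinate $i$ the image $C_i=\pi_i(C)$ is an ideal $\langle\theta^{j_i}\rangle=R\theta^{j_i}$, so axiom (2) of the homogeneous weight gives $\sum_{y\in C_i}{\rm wt_{hom}}(y)=\gamma|C_i|$ and the fiber count $|C|/|C_i|$ cancels $|C_i|$, yielding the uniform coordinate bound $\gamma|C|=(q^{m-1}-q^{m-2})|C|$ regardless of $j_i$ (including the degenerate case $C_i=\{0\}$). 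Your approach is more self-contained -- it needs neither the Gray isometry nor the classical Plotkin bound, and it avoids the case split (where, incidentally, the paper's threshold $(1-\tfrac{1}{q})n$ appears to be a typo for $(1-\tfrac{1}{q})q^{m-1}n=(q^{m-1}-q^{m-2})n$, the correct threshold for the length-$q^{m-1}n$ image code). It also makes transparent exactly which hypothesis is used: only the averaging axiom of the homogeneous weight, so regularity is indeed not needed, and the same argument template would work over any finite ring admitting a homogeneous weight. What the paper's route buys in exchange is that it reuses standard machinery already set up in Section 2.3 and immediately identifies Plotkin-optimality of $C$ with Plotkin-optimality of the Gray image, which is convenient for the later sections.
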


\begin{proof}
By Subsection 2.3, $\Phi(C)$ is a $(q^{m-1}n,|C|,{\rm d_{hom}}(C))$ code over $\F_q$. If ${\rm d_{hom}(C)}\leq (1-\frac{1}{q})n$, then it can be checked that
$$\left(1-\frac{1}{q}\right)n\leq \left\lfloor\frac{(q^{m-1}-q^{m-2})n|C|}{|C|-1}\right\rfloor.$$
Hence the result follows.
If ${\rm d_{hom}(C)}> (1-\frac{1}{q})n$, then it follows from {\rm \cite[Theorem 2.2.1]{Huffman}} that $$|C|\leq \left\lfloor\frac{{\rm d_{hom}}(C)}{{\rm d_{hom}}(C)-(1-\frac{1}{q})n}\right\rfloor.$$
We can obtain the desired result after simplification.
\end{proof}

\begin{remark}
If $R=\Z_{2^m}$, then Plotkin-type homogeneous distance bound is the same as the bound in \cite[Corollary 1]{G-B-L}.
A linear code $C$ over $R$ is said to be {\em Plotkin-optimal} if $C$ meets the Plotkin-type homogeneous distance bound, that is, ${\rm d_{hom}}(C)= \left\lfloor\frac{(q^{m-1}-q^{m-2})n|C|}{|C|-1}\right\rfloor.$
\end{remark}

For convenience, we introduce a notation. Let $G$ be a $k\times n$ matrix over $R$. For any ${\bf c}\in R^k$, the {\em multiplicity} $\mu_G({\bf c})$ of ${\bf c}$ in $G$ is defined by Kl${\rm {\o}}$ve \cite{DM-Klove} as the number of occurrences of ${\bf c}$ as a column vector in $G$.

\begin{lemma}\label{lemma-n}
If $C$ is a regular projective linear $[n;k_0,k_1,\ldots,k_{m-1}]_q$ code over $R$, then
$k_0\geq 1$ and
$$n\leq \frac{q^{k}-q^{k-k_0}}{q^m-q^{m-1}},~i.e.,~(q^m-q^{m-1})n\leq q^{k}-q^{k-k_0},$$
where $k=mk_0+(m-1)k_1+\cdots+k_{m-1}$ is the $q$-dimension of $C$.
\end{lemma}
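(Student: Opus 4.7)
The plan is to bound $n$ by counting the distinct cyclic submodules $\mathbf{g}_iR$ that can arise as column spans of a generator matrix $G$ of $C$ in the standard form displayed earlier. After a permutation of coordinates (which preserves regularity, projectivity, and the type) I would assume $G$ already has this form, with $\ell:=k_0+k_1+\cdots+k_{m-1}$ rows partitioned into $m$ row blocks.

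The easy half is $k_0\geq 1$: if $k_0=0$ then every entry of $G$ is divisible by $\theta$, so no column of $G$ contains a unit and regularity fails. Hence $k_0\geq 1$.

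For the bound on $n$, the key observation is that row block $a$ of $G$ is entrywise divisible by $\theta^a$, so every column $\mathbf{g}_j$ of $G$ lies in the $R$-submodule
$$M \;:=\; R^{k_0}\oplus \langle\theta\rangle^{k_1}\oplus \langle\theta^2\rangle^{k_2}\oplus\cdots\oplus\langle\theta^{m-1}\rangle^{k_{m-1}}\;\subseteq\; R^\ell.$$
Since $|\langle\theta^i\rangle|=q^{m-i}$, multiplying block sizes gives $|M|=q^{mk_0+(m-1)k_1+\cdots+k_{m-1}}=q^k$. Regularity forces each column to contain a unit entry, and since every coordinate outside the first block already lies in $\langle\theta\rangle$, this unit must appear in the first $k_0$ coordinates. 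The elements of $M$ whose first $k_0$ coordinates all lie in $\langle\theta\rangle$ form a submodule of size $q^{k-k_0}$, so $M$ contains exactly $q^k-q^{k-k_0}$ elements with at least one unit entry.

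For any such $\mathbf{g}$, the presence of a unit entry makes the annihilator trivial, so $|\mathbf{g}R|=q^m$ and the cyclic submodule $\mathbf{g}R$ admits exactly $|R^\times|=q^m-q^{m-1}$ generators, namely the $u\mathbf{g}$ for $u\in R^\times$. Projectivity says that $\mathbf{g}_1R,\ldots,\mathbf{g}_nR$ are pairwise distinct, so $n$ is at most the number of unit orbits in the set of unit-containing elements of $M$, yielding
$$n\;\leq\;\frac{q^k-q^{k-k_0}}{q^m-q^{m-1}}.$$

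The main obstacle is identifying the correct ambient module $M$: a crude argument that columns lie in all of $R^\ell$ would replace $q^k$ by $q^{m\ell}$, which is far too large. The entire savings in the bound come from exploiting the $\theta^a$-divisibility forced by the standard form, after which the counts are routine.
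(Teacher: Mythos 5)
Your proof is correct and takes essentially the same approach as the paper's: the paper likewise restricts the columns to the set $S_2=(R^{k_0}\backslash \langle\theta\rangle^{k_0})\times\langle\theta\rangle^{k_1}\times\cdots\times\langle\theta^{m-1}\rangle^{k_{m-1}}$ (the unit-containing part of your module $M$) using regularity, and then uses projectivity to allow at most one column per unit orbit, yielding $n\le |S_2|/(q^m-q^{m-1})=(q^k-q^{k-k_0})/(q^m-q^{m-1})$. Your explicit verification that the columns of the standard-form generator matrix lie in $M$ is a detail the paper leaves implicit.
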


\begin{proof}
Let $k=mk_0+(m-1)k_1+\cdots+k_{m-1}$. Let
$$S_1=\langle \theta\rangle^{k_0}\times \langle \theta\rangle^{k_1}\times\cdots\times \langle \theta^{m-1}\rangle^{k_{m-1}}$$
and
$$S_2=(R^{k_0}\backslash \langle \theta\rangle^{k_0})\times \langle \theta\rangle^{k_1}\times\cdots\times \langle \theta^{m-1}\rangle^{k_{m-1}}.$$
Suppose that $G$ is the generator matrix of $C$. Since $C$ is regular, $k_0\geq 1$ and $\mu_G({\bf c})=0$ for ${\bf c}\in S_1$. Since $C$ is projective, $\sum_{a\in R^\times}\mu_G(a{\bf c}) \leq 1$ for ${\bf c}\in S_2$.
Hence
$$n=\sum_{{\bf c}\in S_1\cup S_2}\mu_G({\bf c})=\sum_{{\bf c}\in S_1}\mu_G({\bf c})+
\sum_{{\bf c}\in S_2}\mu_G({\bf c})\leq \frac{|S_2|}{q^{m}-q^{m-1}}=\frac{q^k-q^{k-k_0}}{q^m-q^{m-1}}.$$
This completes the proof.
\end{proof}

\begin{lemma}\label{lemma-P-optimal}
If $C$ is a regular projective linear $[n;k_0,k_1,\ldots,k_{m-1}]$ code over $R$, then $C$ is Plotkin-optimal if and only if ${\rm d_{ hom}}(C)=(q^{m-1}-q^{m-2})n$.
\end{lemma}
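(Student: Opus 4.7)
The plan is to unfold the floor in the Plotkin-type bound and show that, under the regularity and projectivity hypotheses, it collapses to the integer $(q^{m-1}-q^{m-2})n$; once that collapse is established the equivalence is immediate from the definition of Plotkin-optimality stated in the Remark after Theorem 3.1.

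To begin, I would write
\[
\frac{(q^{m-1}-q^{m-2})n|C|}{|C|-1} \;=\; (q^{m-1}-q^{m-2})n \;+\; \frac{(q^{m-1}-q^{m-2})n}{|C|-1},
\]
and observe that the first summand is a non-negative integer. Hence it suffices to show that
\[
(q^{m-1}-q^{m-2})n \;<\; |C|-1 \;=\; q^k-1,
\]
where $k=\sum_{i=0}^{m-1}(m-i)k_i$ is the $q$-dimension of $C$, for then the fractional remainder above is strictly less than $1$ and the floor reduces to the integer $(q^{m-1}-q^{m-2})n$.

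The needed estimate comes from Lemma \ref{lemma-n}: regularity forces $k_0\geq 1$ and, together with projectivity and the column-counting argument there, yields $(q^m-q^{m-1})n\leq q^k-q^{k-k_0}$. Dividing through by $q$ gives
\[
(q^{m-1}-q^{m-2})n \;\leq\; q^{k-1}-q^{k-k_0-1},
\]
so it remains to verify the strict arithmetic inequality $q^{k-1}-q^{k-k_0-1}<q^k-1$, equivalently $(q-1)q^{k-1}+q^{k-k_0-1}>1$. Since $q\geq 2$ and $k\geq k_0\geq 1$, we have $(q-1)q^{k-1}\geq 1$ while $q^{k-k_0-1}>0$, so the sum strictly exceeds $1$, as required.

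The only delicate point is guaranteeing strictness (rather than $\leq$) in the passage from Lemma \ref{lemma-n} to $(q^{m-1}-q^{m-2})n<q^k-1$; the strict positivity of the small correction $q^{k-k_0-1}$ supplies the needed cushion and prevents the floor from rounding up. Beyond that, the argument is a direct arithmetic consequence of the bound established in Lemma \ref{lemma-n}.
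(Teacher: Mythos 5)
Your proposal is correct and follows essentially the same route as the paper: both use Lemma \ref{lemma-n} to establish $(q^{m-1}-q^{m-2})n < q^k-1=|C|-1$ and then observe that the floor in the Plotkin-type bound collapses to the integer $(q^{m-1}-q^{m-2})n$. The only cosmetic difference is that you divide the inequality of Lemma \ref{lemma-n} by $q$ before comparing with $q^k-1$, whereas the paper chains $(q^{m-1}-q^{m-2})n<(q^m-q^{m-1})n\leq q^k-q^{k-k_0}\leq q^k-1$ directly.
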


\begin{proof}
Suppose that $k=mk_0+(m-1)k_1+\cdots+k_{m-1}$. Let $C$ be a regular projective linear $[n;k_0,k_1,\ldots,k_{m-1}]$ code over $R$.
By Lemma \ref{lemma-n}, we have $k_0\geq 1$ and
$$(q^{m-1}-q^{m-2})n<(q^{m}-q^{m-1})n\leq q^k-q^{k-k_0}\leq q^k-1=|C|-1.$$
This implies that
$\left\lfloor\frac{(q^{m-1}-q^{m-2})n}{|C|-1}\right\rfloor=0$.
Hence we have
\begin{align*}
  \left\lfloor\frac{(q^{m-1}-q^{m-2})n|C|}{|C|-1}\right\rfloor & =  \left\lfloor\frac{(q^{m-1}-q^{m-2})n(|C|-1+1)}{|C|-1}\right\rfloor\\
   & =(q^{m-1}-q^{m-2})n+\left\lfloor\frac{(q^{m-1}-q^{m-2})n}{|C|-1}\right\rfloor\\
   &=(q^{m-1}-q^{m-2})n.
\end{align*}
Therefore, $C$ is Plotkin-optimal if and only if ${\rm d_{ hom}}(C)=(q^{m-1}-q^{m-2})n$.
\end{proof}

\begin{lemma}\label{A}
For any $\textbf c\in R^n$ and $0\leq j\leq m-1$, we have
$$\sum_{a\in \langle\theta^j\rangle}{\rm wt_{hom}}(\textbf c + a{\rm{\bf 1}})=(q-1)q^{2m-2-j}n,$$
where ${\bf 1}$ is the all-one row vector of length $n$.
\end{lemma}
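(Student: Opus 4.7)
The plan is to swap the order of summation so that the identity reduces to a single-coordinate statement. Writing $\mathbf{c} = (c_1, \ldots, c_n)$ and using that the homogeneous weight is additive over coordinates, we get
\[
\sum_{a \in \langle \theta^j\rangle} {\rm wt_{hom}}(\mathbf{c} + a\mathbf{1}) = \sum_{i=1}^{n} \sum_{a \in \langle \theta^j\rangle} {\rm wt_{hom}}(c_i + a).
\]
So it suffices to show that for every $c \in R$ and every $0 \leq j \leq m-1$,
\[
T(c) := \sum_{a \in \langle \theta^j\rangle} {\rm wt_{hom}}(c+a) = (q-1)q^{2m-2-j},
\]
independently of $c$. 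Summing over the $n$ coordinates then yields the claim.

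To compute $T(c)$, I would split according to whether the coset $c + \langle \theta^j\rangle$ equals $\langle \theta^j\rangle$ or is disjoint from it, and count how many elements of the coset fall into each of the three weight classes: $\{0\}$, $\langle \theta^{m-1}\rangle \setminus \{0\}$, and $R \setminus \langle \theta^{m-1}\rangle$. First suppose $c \in \langle \theta^j\rangle$, so the coset is $\langle \theta^j\rangle$ itself, which has $q^{m-j}$ elements. Since $j \leq m-1$, we have $\langle \theta^{m-1}\rangle \subseteq \langle \theta^j\rangle$, so the coset contains exactly one zero element, $q-1$ elements of $\langle \theta^{m-1}\rangle \setminus \{0\}$ (each of weight $q^{m-1}$), and the remaining $q^{m-j} - q$ elements lie in $R \setminus \langle \theta^{m-1}\rangle$ (each of weight $(q-1)q^{m-2}$). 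Adding these contributions and factoring gives $(q-1)q^{m-2}(q + q^{m-j} - q) = (q-1)q^{2m-2-j}$, as required. (Equivalently, this case is the defining property of the homogeneous weight applied to the ideal $\langle \theta^j\rangle$ with $\gamma = (q-1)q^{m-2}$.)

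Next suppose $c \notin \langle \theta^j\rangle$. The key observation is that $(c + \langle \theta^j\rangle) \cap \langle \theta^{m-1}\rangle = \emptyset$: indeed, if $c + a \in \langle \theta^{m-1}\rangle$ for some $a \in \langle \theta^j\rangle$, then $c \in \langle \theta^j\rangle + \langle \theta^{m-1}\rangle = \langle \theta^j\rangle$ (using $\langle \theta^{m-1}\rangle \subseteq \langle \theta^j\rangle$), a contradiction. In particular the coset contains no zero, and every one of its $q^{m-j}$ elements lies in $R \setminus \langle \theta^{m-1}\rangle$ and therefore has weight $(q-1)q^{m-2}$. Thus $T(c) = q^{m-j}(q-1)q^{m-2} = (q-1)q^{2m-2-j}$, matching the previous case.

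The argument is mostly mechanical once the reduction to a single coordinate is made. The only subtle point, which I expect to be the one to handle carefully, is verifying the disjointness $(c + \langle \theta^j\rangle) \cap \langle \theta^{m-1}\rangle = \emptyset$ when $c \notin \langle \theta^j\rangle$; this is where the chain structure (that $\langle \theta^{m-1}\rangle$ is contained in every nonzero ideal) is essential and what makes the count in Case 2 collapse to a single weight class.
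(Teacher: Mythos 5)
Your proof is correct and follows essentially the same route as the paper's: reduce to a single coordinate by additivity, then split into the two cases $c\in\langle\theta^j\rangle$ and $c\notin\langle\theta^j\rangle$, using the disjointness $(c+\langle\theta^j\rangle)\cap\langle\theta^{m-1}\rangle=\emptyset$ in the latter case. The only difference is cosmetic: you spell out the justification of that disjointness (via $\langle\theta^{m-1}\rangle\subseteq\langle\theta^j\rangle$), which the paper merely asserts.
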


\begin{proof}
Suppose that $c\in R$ and $0\leq j\leq m-1$. If $c \notin \langle\theta^j\rangle$, then $(c + \langle\theta^j\rangle) \cap \langle\theta^{m-1}\rangle=\emptyset$. Hence ${\rm wt_{hom}}(c+a)=(q-1)q^{m-2}$ for any $a\in \langle\theta^j\rangle$ and
$$\sum_{a\in \langle\theta^j\rangle}{\rm wt_{hom}}(c+a)=\sum_{a\in \langle\theta^j\rangle}(q-1)q^{m-2}=(q-1)q^{2m-2-j}.$$
If $c\in \langle\theta^j\rangle$, then $c + \langle\theta^j\rangle=\langle\theta^j\rangle$. So
\begin{align*}
  \sum_{a\in \langle\theta^j\rangle}{\rm wt_{hom}}(c+a) & =\sum_{a\in \langle\theta^j\rangle}{\rm wt_{hom}}(a) \\
   & =\sum_{a\in \langle\theta^{m-1}\rangle}{\rm wt_{hom}}(a)+\sum_{a\in \langle\theta^j\rangle\backslash \langle\theta^{m-1}\rangle}{\rm wt_{hom}}(a)\\
   & =(q-1) q^{m-1}+(q^{m-j}-q) (q-1)q^{m-2}\\
   &=(q-1)q^{2m-2-j}.
\end{align*}
It follows that
$$\sum_{a\in \langle\theta^j\rangle}{\rm wt_{hom}}(\textbf c + a{\rm{\bf 1}}) =
\sum_{k=1}^n\sum_{a\in \langle\theta^j\rangle}{\rm wt_{hom}}(c_k+a)=\sum_{k=1}^n(q-1)q^{2m-2-j}=(q-1)q^{2m-2-j}n.$$
This completes the proof.
\end{proof}

\begin{lemma}\label{lem-one-weight}
Let $k_i$ be a nonnegative integer, where $1\leq i\leq m$. Let $C$ be a linear $[n;k_0,k_1,\ldots,k_{m-1}]_q$ code over $R$ with the generator matrix $G$ whose columns are all distinct nonzero vectors
$$(c_1,\ldots,c_{k_1},c_{k_1+1},\ldots,c_{k_1+k_2},\ldots, c_{k_1+\cdots +k_{m-1}+1},\ldots,c_{k_1+\cdots+k_m})^{{\rm T}},$$
where $c_{i_j}\in \langle\theta^j\rangle$ for $k_1+\cdots+k_{j}+1\leq i_j\leq k_1+\cdots+k_{j+1}$ and $0\leq j\leq m-1$.
Then $C$ is a one-homogeneous weight code with the nonzero homogeneous weight $\omega=q^{k}(q^{m-1}-q^{m-2})$ and $n=q^k-1$, where $k$ is the $q$-dimension of $C$.
\end{lemma}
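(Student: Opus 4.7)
The plan is to realize the columns of $G$ as precisely the nonzero vectors of the $R$-submodule
$$V := R^{k_1} \oplus \langle\theta\rangle^{k_2} \oplus \cdots \oplus \langle\theta^{m-1}\rangle^{k_m} \subseteq R^{\ell},$$
where $\ell = k_1 + k_2 + \cdots + k_m$ is the number of rows of $G$. Since $|\langle\theta^j\rangle| = q^{m-j}$, this immediately gives $|V| = q^{mk_1 + (m-1)k_2 + \cdots + k_m} = q^k$, and hence $n = |V| - 1 = q^k - 1$, establishing the length claim. Every codeword has the form $xG$ for some $x \in R^\ell$, and its coordinates are the values of the $R$-linear evaluation map $\phi_x : V \to R$, $\phi_x(\mathbf{v}) := x\cdot \mathbf{v}$, at the nonzero points of $V$.

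Fix a nonzero codeword $\mathbf{w} = xG$. Since $\phi_x(\mathbf{0}) = 0$, the summand at $\mathbf{v}=\mathbf{0}$ contributes nothing, and
$${\rm wt_{hom}}(\mathbf{w}) = \sum_{\mathbf{v}\in V\setminus\{\mathbf{0}\}} {\rm wt_{hom}}(\phi_x(\mathbf{v})) = \sum_{\mathbf{v}\in V} {\rm wt_{hom}}(\phi_x(\mathbf{v})).$$
The image of $\phi_x$ is an $R$-submodule of $R$, hence an ideal $\langle\theta^s\rangle$ for some $0\leq s \leq m$; the hypothesis $\mathbf{w}\neq\mathbf{0}$ forces $s\leq m-1$. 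By $R$-linearity the fibers of $\phi_x$ over its image are translates of $\ker\phi_x$, all of size $|V|/|\langle\theta^s\rangle| = q^{k-m+s}$. Consequently
$${\rm wt_{hom}}(\mathbf{w}) = q^{k-m+s} \sum_{a\in\langle\theta^s\rangle} {\rm wt_{hom}}(a).$$

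To finish, I would apply Lemma \ref{A} with its vector parameter taken to be $\mathbf{0}$, with $n=1$ and $j=s$; this evaluates the inner sum as $(q-1)q^{2m-2-s}$. The factors involving $s$ then cancel, leaving
$${\rm wt_{hom}}(\mathbf{w}) = (q-1)q^{k+m-2} = q^k(q^{m-1}-q^{m-2}),$$
independent of $x$. Hence every nonzero codeword carries this common weight, proving that $C$ is a one-homogeneous weight code with $\omega = q^k(q^{m-1}-q^{m-2})$. The step demanding the most care is the image-and-fiber analysis of $\phi_x$: it rests only on the $R$-linearity of $\phi_x$, not on freeness of $V$ (which generally fails, since $\langle\theta^j\rangle^{k_j}$ carries $R$-torsion for $j\geq 1$). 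The subsequent cancellation of the exponent $s$ in the final line is the arithmetic that forces the single-weight behavior to hold regardless of which $x$ one started from.
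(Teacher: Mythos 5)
Your proof is correct. The paper's own ``proof'' of this lemma is the single sentence ``The result is obvious by Lemma \ref{A},'' so you are supplying details the authors omit; your route is compatible with theirs but organizes the computation differently. Where the paper (implicitly) intends a coordinate-wise coset decomposition feeding directly into Lemma \ref{A}, you instead analyze the evaluation map $\phi_x$ globally: its image is an ideal $\langle\theta^s\rangle$, its fibers are cosets of the kernel of common size $q^{k-m+s}$, and the weight collapses to $q^{k-m+s}\sum_{a\in\langle\theta^s\rangle}{\rm wt_{hom}}(a)=q^{k-m+s}(q-1)q^{2m-2-s}=q^k(q^{m-1}-q^{m-2})$, with the dependence on $s$ cancelling. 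This is clean and fully rigorous; the identification of the columns with $V\setminus\{\mathbf{0}\}$ gives $n=q^k-1$ immediately, your remark that the argument needs only $R$-linearity (not freeness of $V$) is apt, and the only ingredient you borrow from Lemma \ref{A} is the elementary ideal-sum $\sum_{a\in\langle\theta^s\rangle}{\rm wt_{hom}}(a)=(q-1)q^{2m-2-s}$, which is exactly the computation appearing inside that lemma's proof. No gaps.
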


\begin{proof}
The result is obvious by Lemma \ref{A}.
\end{proof}

\begin{remark}
It can be checked that the Gray image of the code of Lemma \ref{lem-one-weight} is a one-Hamming weight $(q^{m-1}(q^k-1),q^k,q^k(q^{m-1}-q^{m-2}))$ code, which attains the Plotkin Hamming bound (see \cite[Theorem 2.2.1]{Huffman}).
For the study of one-weight codes over rings, one refers to \cite{one-z4,wood}. Here we give a general construction method for one-homogeneous weight codes over finite chain rings.
\end{remark}

\begin{theorem}\label{thm-1}
Let $C$ be an $M$-homogeneous weight linear $[n;k_0,k_1,\ldots,k_{m-1}]$ code with generator matrix $G$ over $R$. Suppose that $C$ has the weight distribution
$$1+A_{\omega_1}X^{\omega_1}+A_{\omega_2}X^{\omega_2}+\cdots+A_{\omega_M}X^{\omega_M}.$$
Let $C_0$ be the linear code over $R$ with the following generator matrix
$$G_0=\left[
        \begin{array}{c|c|c|c|c}
          G & G & G&\cdots & G \\
          \hline
          {\bf 0} & a_1{\bf 1}& a_2{\bf 1} & \cdots& a_{q^{m-m_0}-1}{\bf 1} \\
        \end{array}
      \right],
$$
where $0\leq m_0\leq m-1$ and $\langle \theta^{m_0}\rangle=\{0,a_1,a_2,\ldots,a_{q^{m-m_0}-1}\}$. Then the following statements hold.
\begin{itemize}
  \item [{\rm(1)}] If there exists $1\leq i_0\leq M$ such that $\omega_{i_0}=(q^{m-1}-q^{m-2})n$, then $C_0$ is an $M$-homogeneous weight linear $[q^{m-m_{0}}n;k_0,\ldots,k_{m_0-1},k_{m_0}+1,k_{m_0+1},\ldots,k_{m-1}]$ code. Moreover, $C_0$ has the homogeneous weight distribution $$1+B_{\omega'_1}X^{\omega'_1}+B_{\omega'_2}X^{\omega'_2}+\cdots+
      B_{\omega'_M}X^{\omega'_M},$$
where $\omega'_i=q^{m-m_0}\omega_i$ and $$B_{w'_i}=\left\{\begin{array}{ll}
                         A_{w_i}, & {\rm if}~i\neq i_0, \\
                         A_{w_i}+(q^{m-m_0}-1)|C|, & {\rm if}~i= i_0.
                       \end{array}
\right.$$

  \item [{\rm(2)}] If there is no $1\leq i\leq M$ such that $\omega_{i}=(q^{m-1}-q^{m-2})n$, then $C_0$ is an $(M+1)$-homogeneous weight linear $[q^{m-m_{0}}n;k_0,\ldots,k_{m_0-1},k_{m_0}+1,k_{m_0+1},\ldots,k_{m-1}]$ code. Moreover, $C_0$ has the homogeneous weight distribution
       $$1+B_{\omega'_1}X^{\omega'_1}+B_{\omega'_2}X^{\omega'_2}+\cdots+
      B_{\omega'_M}X^{\omega'_M}+B_{\omega'_{M+1}}X^{\omega'_{M+1}},$$
where
\begin{center}
{\small $\omega'_i=\left\{\begin{array}{ll}
                         q^{m-m_0}\omega_i, & {\rm if}~1\leq i\leq M, \\
                         (q-1)q^{2m-2}n, & {\rm if}~i= M+1.
                       \end{array}
\right.$ and
$B_{w'_i}=\left\{\begin{array}{ll}
                         A_{w_i}, & {\rm if}~1\leq i\leq M, \\
                         (q^{m-m_0}-1)|C|, & {\rm if}~i= M+1.
                       \end{array}
\right.$}
\end{center}

  \item [{\rm(3)}] $C$ is regular if and only if $C_0$ is regular.

  \item [{\rm(4)}] $C$ is projective if and only if $C_0$ is projective.

  \item [{\rm(5)}] $C$ is Plotkin-optimal if and only if $C_0$ is Plotkin-optimal for any $m_0$.

\end{itemize}
\end{theorem}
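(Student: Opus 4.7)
The plan is to parametrize the codewords of $C_0$ in terms of those of $C$, do the weight computation via Lemma \ref{A}, and then read off parts (3)--(5) from the shape of $G_0$. Every codeword of $C_0$ has the form
\[
\bigl(\mathbf{c},\ \mathbf{c}+\lambda a_1\mathbf{1},\ \ldots,\ \mathbf{c}+\lambda a_{q^{m-m_0}-1}\mathbf{1}\bigr),
\]
where $\mathbf{c}\in C$ and $\lambda\in R$. Since the annihilator of the extra row of $G_0$ is exactly $\langle\theta^{m-m_0}\rangle$, the pair $(\mathbf{c},\lambda)$ is unique modulo shifting $\lambda$ by $\langle\theta^{m-m_0}\rangle$, which gives $|C_0|=q^{m-m_0}|C|$ and confirms the asserted type.

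For parts (1) and (2), I split on $\lambda$. When $\lambda\in\langle\theta^{m-m_0}\rangle$, the codeword degenerates to the $q^{m-m_0}$-fold concatenation of $\mathbf{c}$ and has weight $q^{m-m_0}\mathrm{wt}_{\rm hom}(\mathbf{c})$; these $|C|$ codewords contribute the weight $q^{m-m_0}\omega_i$ with multiplicity $A_{\omega_i}$. Otherwise, write $\lambda=\theta^v u$ with $u\in R^\times$ and $0\leq v<m-m_0$. The map $a\mapsto\lambda a$ sends $\langle\theta^{m_0}\rangle$ onto $\langle\theta^{m_0+v}\rangle$ with constant fiber size $q^{v}$, so
\[
\mathrm{wt}_{\rm hom}(\mathbf{c}_0)=\sum_{a\in\langle\theta^{m_0}\rangle}\mathrm{wt}_{\rm hom}(\mathbf{c}+\lambda a\mathbf{1})=q^{v}\!\!\sum_{b\in\langle\theta^{m_0+v}\rangle}\!\!\mathrm{wt}_{\rm hom}(\mathbf{c}+b\mathbf{1}).
\]
Applying Lemma \ref{A} with $j=m_0+v$ collapses this to the single value $q^{m-m_0}(q^{m-1}-q^{m-2})n$, independent of both $v$ and $\mathbf{c}$. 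Hence all $(q^{m-m_0}-1)|C|$ codewords of this branch land at this common weight. Whether this value coincides with one of the $q^{m-m_0}\omega_i$ is exactly the dichotomy of cases (1) and (2): since $q^{m-m_0}\omega_i = q^{m-m_0}(q^{m-1}-q^{m-2})n$ iff $\omega_i=(q^{m-1}-q^{m-2})n$, in case (1) the two contributions merge at $\omega'_{i_0}$, while in case (2) a new weight $\omega'_{M+1}$ is introduced.

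Parts (3) and (4) are read directly off $G_0$. Each column of $G_0$ is a column of $G$ with either $0$ or an element of $\langle\theta^{m_0}\rangle\setminus\{0\}$ appended. Appending a non-unit leaves the set of units in the column unchanged, which gives (3). For (4), if $C$ is projective then two columns of $G_0$ span the same $R$-submodule only if their $G$-parts are a unit multiple of each other and the appended scalars are in the same ratio; a short case analysis across the $q^{m-m_0}$ blocks rules this out unless they are literally the same column. Conversely, if $C$ is not projective, two offending columns of $G$ already yield two $G_0$-columns in the first block that violate projectivity of $C_0$. Finally, part (5) follows by combining (1)--(4) with Lemma \ref{lemma-P-optimal}: the Plotkin bound for the regular projective code $C_0$ of length $q^{m-m_0}n$ is $q^{m-m_0}(q^{m-1}-q^{m-2})n$, and by the weight calculation above $C_0$ meets this bound iff some $\omega_i=(q^{m-1}-q^{m-2})n$, iff $C$ meets its own Plotkin bound.

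The main obstacle I anticipate is the second-branch weight computation: one must correctly identify the image and fiber structure of multiplication by $\lambda$ on $\langle\theta^{m_0}\rangle$ and invoke Lemma \ref{A} with the shifted index $m_0+v$, so that the $q^{v}$ factor outside the sum precisely cancels the $q^{-v}$ hidden in the exponent of Lemma \ref{A}'s formula. Once this independence of $v$ is established, the bookkeeping for (1)--(4) is straightforward, and (5) reduces immediately to the characterization of Plotkin-optimality in Lemma \ref{lemma-P-optimal}.
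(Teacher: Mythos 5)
Your proof is correct and follows essentially the same route as the paper: the paper likewise reduces (1)--(2) to Lemma \ref{A} (omitting the fiber computation you supply for the $\lambda\notin\langle\theta^{m-m_0}\rangle$ branch), verifies (3)--(4) column by column on $G_0$ using the block with $a=0$ for the converse directions, and obtains (5) from ${\rm d_{hom}}(C_0)=q^{m-m_0}\,{\rm d_{hom}}(C)$ together with Lemma \ref{lemma-P-optimal}. Your computation in the second branch gives the new weight as $q^{m-m_0}(q^{m-1}-q^{m-2})n=(q-1)q^{2m-2-m_0}n$, which is the correct value and consistent with case (1); the exponent $2m-2$ in the stated $\omega'_{M+1}$ is a typo that is valid only for $m_0=0$. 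One small imprecision in your (5): Plotkin-optimality of $C_0$ is equivalent to $\min_i\omega_i=(q^{m-1}-q^{m-2})n$ rather than to \emph{some} $\omega_i$ attaining that value; since ${\rm d_{hom}}(C)\le(q^{m-1}-q^{m-2})n$ for regular projective codes the two endpoints of your chain are still equivalent, but the middle step should be phrased via the minimum weight.
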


\begin{proof}
Suppose that $G=[{\bf g}_1~|~{\bf g}_2~|~\cdots~|~{\bf g}_n]$ is an $\ell\times n$ matrix, where $\ell=k_1+k_2+\cdots+k_m$. Then
${\bf g}_{i,a}=\left[\begin{array}{c}
         {\bf g}_i \\
         a
       \end{array}
\right]$ is any column of $G_0$, where $1\leq i\leq n$ and $a\in \langle\theta^{m_0}\rangle$.

(1) and (2). Applying Lemma \ref{A} to the code $C$, we can obtain the desired result. Here we omit the proof.

(3). Suppose that $C$ is regular. Then we have $\{{\bf x}\cdot {\bf g}_i~|~{\bf x}\in R^\ell\}=R$, where $1\leq i\leq n$. It turns out that
$$R=\{{\bf x}\cdot {\bf g}_i~|~{\bf x}\in R^\ell\}\subseteq \{{\bf x}\cdot {\bf g}_{i,a}~|~{\bf x}\in R^{\ell+1}\}\subseteq R,$$
 where $1\leq i\leq n$ and $a\in \langle\theta^{m_0}\rangle$. Hence $\{{\bf x}\cdot {\bf g}_{i,a}~|~{\bf x}\in R^{\ell+1}\}=R$ and $C_0$ is regular.

Conversely, suppose that $C_0$ is regular.
Then $\{{\bf x}\cdot {\bf g}_{i,a}~|~{\bf x}\in R^{\ell+1}\}=R$ for any $1\leq i\leq n$ and $a\in \langle\theta^{m_0}\rangle$.
When $a=0\in \langle\theta^{m_0}\rangle$,
$$\{{\bf x}\cdot {\bf g}_i~|~{\bf x}\in R^\ell\}=\{{\bf x}\cdot {\bf g}_{i,a}~|~{\bf x}\in R^{\ell+1}\}=R,$$
where $1\leq i\leq n$. Hence $C$ is regular.

(4). Suppose that $C$ is projective. Then we have ${\bf g}_iR\neq {\bf g}_jR$ for any pair of distinct coordinates $i,j\in\{1,2,\ldots,n\}$. It follows that ${\bf g}_{i,a}R\neq {\bf g}_{j,b}R$ for any pair of distinct coordinates $i,j\in\{1,2,\ldots,n\}$ and $a,b\in \langle\theta^{m_0}\rangle$. In addition, it is easy to see that ${\bf g}_{i,a}R\neq {\bf g}_{i,b}R$ for any pair of distinct elements $a,b\in \langle\theta^{m_0}\rangle$ and $1\leq i\leq n$. Hence $C_0$ is projective.

Conversely, suppose that $C_0$ is projective.
Then we have ${\bf g}_{i,0}R\neq {\bf g}_{j,0}R$ for any pair of distinct coordinates $i,j\in\{1,2,\ldots,n\}$.
Hence ${\bf g}_{i}R\neq {\bf g}_{j}R$ for any pair of distinct coordinates $i,j\in\{1,2,\ldots,n\}$. Therefore, $C$ is projective.

(5). It is easy to see that
$${\rm d_{ hom}}(C_0)=q^{m-m_0}{\rm d_{ hom}}(C).$$
Thus ${\rm d_{ hom}}(C_0)=(q^{m-1}-q^{m-2})q^{m-m_0}n$ if and only if ${\rm d_{ hom}}(C)=(q^{m-1}-q^{m-2})n$. That is to say, $C$ is Plotkin-optimal if and only if $C_0$ is Plotkin-optimal for any $m_0$.
\end{proof}


\begin{cor}\label{cor-optimal}
Let $C$ be a Plotkin-optimal regular projective linear $[n;k_0,k_1,\ldots,k_{m-1}]_q$ code over $R$ with the homogeneous weight distribution
$$1+A_{\omega_1}X^{\omega_1}+A_{\omega_2}X^{\omega_2}+\cdots+A_{\omega_M}X^{\omega_M},$$
where $\omega_1=(q^{m-1}-q^{m-2})n.$
Let $s_i$ be a nonnegative integer for $0\leq i\leq m-1$. Then there exists a
Plotkin-optimal regular projective linear $[q^sn;k_0+s_0,k_1+s_1,\ldots,k_{m-1}+s_{m-1}]_q$ code $C'$ over $R$ with the homogeneous weight distribution
$$1+B_{\omega'_1}X^{\omega'_1}+B_{\omega'_2}X^{\omega'_2}+\cdots+
      B_{\omega'_M}X^{\omega'_M},$$
where $s=ms_0+(m-1)s_1+\cdots+s_{m-1}$, $\omega'_i=q^s\omega_i$ and $$B_{w'_i}=\left\{\begin{array}{ll}
                         A_{w_i}, & {\rm if}~i\neq 1, \\
                         |C'|-1\sum_{i=2}^{M}A_{\omega_M}, & {\rm if}~i= 1.
                       \end{array}
\right.$$
\end{cor}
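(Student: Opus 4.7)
The plan is to derive Corollary \ref{cor-optimal} by iterating Theorem \ref{thm-1} a total of $s_0+s_1+\cdots+s_{m-1}$ times, performing $s_{m_0}$ successive applications with parameter $m_0$ for each $m_0\in\{0,1,\ldots,m-1\}$. I would argue by induction on the number of steps carried out so far, starting from $C'=C$ in the base case and applying the construction of Theorem \ref{thm-1} to the current code at each inductive step. Parts (3), (4), and (5) of Theorem \ref{thm-1} immediately guarantee that regularity, projectivity, and Plotkin-optimality persist throughout the process, so after all iterations the final code is automatically a Plotkin-optimal regular projective code.

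The only subtle point is to justify that part (1) of Theorem \ref{thm-1} (and not part (2)) applies at every step, so that the number $M$ of distinct nonzero homogeneous weights does not grow. This reduces to the observation that if the current code has length $n_j$ and realises the weight $(q^{m-1}-q^{m-2})n_j$, then after one more application with parameter $m_0$ the length becomes $q^{m-m_0}n_j$ and every weight is scaled by $q^{m-m_0}$; in particular the scaled weight $q^{m-m_0}(q^{m-1}-q^{m-2})n_j=(q^{m-1}-q^{m-2})(q^{m-m_0}n_j)$ is again the Plotkin weight of the resulting code. Thus the hypothesis $\omega_1=(q^{m-1}-q^{m-2})n$ is a stable invariant of the construction, case (1) of Theorem \ref{thm-1} applies repeatedly, and for $i\neq 1$ the multiplicities $A_{\omega_i}$ are never altered.

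After completing the $s_{m_0}$ applications with parameter $m_0$ for each $m_0$, the length has been multiplied by $\prod_{m_0=0}^{m-1}q^{(m-m_0)s_{m_0}}=q^{ms_0+(m-1)s_1+\cdots+s_{m-1}}=q^s$, the $k_{m_0}$-component of the type has been increased by $s_{m_0}$, and every weight $\omega_i$ has been scaled to $\omega'_i=q^s\omega_i$. The relations $B_{\omega'_i}=A_{\omega_i}$ for $i\neq 1$ follow from the invariant above, and $B_{\omega'_1}$ is then forced by the total count identity $\sum_{i=1}^M B_{\omega'_i}=|C'|-1$, which recovers the stated formula. The main obstacle is essentially careful bookkeeping across the iteration together with pinning down the Plotkin-weight invariant that keeps case (1) of Theorem \ref{thm-1} active; once this invariant is in place, the induction itself is routine.
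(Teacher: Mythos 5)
Your proposal is correct and follows exactly the route the paper intends: the paper simply states that the corollary is ``straightforward from Theorem~\ref{thm-1}'' and omits all details, and your iterated application of that theorem ($s_{m_0}$ times with parameter $m_0$ for each $m_0$), together with the observation that the Plotkin weight $(q^{m-1}-q^{m-2})n$ is a stable invariant keeping case (1) active, is the natural and correct way to fill in that omission. Your bookkeeping of the length factor $q^{s}$, the type increments, and the recovery of $B_{\omega'_1}$ from the total count $\sum_i B_{\omega'_i}=|C'|-1$ all check out against the explicit multiplicity formula in Theorem~\ref{thm-1}(1).
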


\begin{proof}
The proof is straightforward from Theorem \ref{thm-1}, so we omit it.
\end{proof}

\section{Plotkin-optimal two-homogeneous weight linear codes over $R$}

\subsection{Possible parameters of Plotkin-optimal two-homogeneous weight regular projective linear codes over $R$}
Suppose that $m\geq 2$. Let $R$ be a finite chain ring of the depth $m$ with the residue field $\F_q$ and the maximum idea $\langle \theta\rangle$. The {\em distance matrix} of a code $C$ over $R$ is the $|C|\times |C|$ matrix $D$ with rows and columns indexed by the elements of $C$ and whose $(u,v)$-th entry is $D_{uv}={\rm wt_{hom}}(u-v)$. The following theorem characterizes some properties of the distance matrix of a code.

\begin{theorem}{\rm \cite[Theorem 13]{FFA-2-weight}}\label{thm-2-equ}
Let $C$ be a regular projective linear code over $R$. Let $J$ denote the all-ones matrix of order $|C|$. Then
\begin{itemize}
  \item [{\rm(1)}] $DJ=n(q^{m-1}-q^{m-2})|C|J$ and
  \item [{\rm(2)}] $D^2+\frac{(q^{m-1}-q^{m-2})|C|}{|R^\times|}D=
      n(q^{m-1}-q^{m-2})^2|C|\left(\frac{1}{|R^\times|}+n\right)J$.
\end{itemize}
\end{theorem}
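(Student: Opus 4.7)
The plan for (1) is immediate. Since $C$ is linear, the map $w\mapsto u-w$ permutes $C$, so $(DJ)_{uv}=\sum_{w\in C}\mathrm{wt}_{\mathrm{hom}}(u-w)=\sum_{c\in C}\mathrm{wt}_{\mathrm{hom}}(c)$ is independent of $u,v$. Decomposing coordinatewise and using that regularity makes the $i$-th coordinate projection $C\to R$ a surjective $R$-module homomorphism with uniform fibers of size $|C|/q^{m}$, the inner sum reduces to $\frac{|C|}{q^{m}}\sum_{r\in R}\mathrm{wt}_{\mathrm{hom}}(r)$. Taking $x=1$ in Definition 2.2 gives $\sum_{r\in R}\mathrm{wt}_{\mathrm{hom}}(r)=\gamma|R|=(q-1)q^{2m-2}$, and summing over the $n$ coordinates yields $n(q^{m-1}-q^{m-2})|C|$, as required.

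For (2), I would expand $(D^{2})_{uv}=\sum_{w\in C}\mathrm{wt}_{\mathrm{hom}}(u-w)\mathrm{wt}_{\mathrm{hom}}(w-v)$, set $c=u-v$, and substitute $z=w-v$ to reduce the entry to $F(c):=\sum_{z\in C}\mathrm{wt}_{\mathrm{hom}}(c-z)\mathrm{wt}_{\mathrm{hom}}(z)$, a function of $c\in C$ alone. Decomposing both factors coordinatewise gives $F(c)=\sum_{i}T_{ii}(c)+\sum_{i\neq j}T_{ij}(c)$, where $T_{ij}(c):=\sum_{z\in C}\mathrm{wt}_{\mathrm{hom}}(c_i-z_i)\mathrm{wt}_{\mathrm{hom}}(z_j)$. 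For the diagonal piece, regularity equidistributes $z_i\in R$, so $T_{ii}(c)=\frac{|C|}{q^{m}}\sum_{r\in R}\mathrm{wt}_{\mathrm{hom}}(c_i-r)\mathrm{wt}_{\mathrm{hom}}(r)$; using the three-valued form of $\mathrm{wt}_{\mathrm{hom}}$ this is a closed-form expression affine in $\mathbf{1}_{c_i=0}$ and $\mathbf{1}_{c_i\in\langle\theta^{m-1}\rangle}$. Summed over $i$ it rewrites as an affine combination of $\mathrm{wt}_{\mathrm{hom}}(c)$ and a constant, producing precisely the coefficient $-(q^{m-1}-q^{m-2})|C|/|R^{\times}|$ in front of $\mathrm{wt}_{\mathrm{hom}}(c)$, i.e. the $D$-term of the identity.

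The main obstacle is showing that the off-diagonal sum $\sum_{i\neq j}T_{ij}(c)$ is constant in $c$, so that it contributes only to the $J$-term; this is where projectivity does its work. Over a finite chain ring the image of $x\mapsto(xg_i,xg_j)$ need not be all of $R^{2}$ even when both coordinate projections are surjective (there are hybrid submodules such as $R(1,u)+R(0,\theta^{j})$), so a direct counting of $|\{z\in C:z_i=r,z_j=s\}|$ is delicate. My plan is to bypass this by Fourier analysis on $(R,+)$: $R$ is Frobenius with generating character $\chi$, and $\mathrm{wt}_{\mathrm{hom}}$ admits an expansion $\mathrm{wt}_{\mathrm{hom}}(r)=\frac{1}{|R|}\sum_{a\in R}\widehat{w}(a)\chi(ar)$. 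Substituting this into $T_{ij}(c)$ and using the orthogonality $\sum_{z\in C}\chi(v\cdot z)=|C|\cdot\mathbf{1}_{v\in C^{\perp}}$ shows that $T_{ij}(c)$ reduces to a sum over the $v\in C^{\perp}$ supported on $\{i,j\}$. The remark after Definition 2.4 extends to arbitrary $R$: a weight-$1$ dual codeword would annihilate a unit entry (ruled out by regularity), and a weight-$2$ dual codeword would force two columns of $G$ to generate the same cyclic $R$-submodule (ruled out by projectivity). Hence $\mathrm{d}_{\mathrm{hom}}(C^{\perp})\geq 3$, only $v=0$ contributes, and $T_{ij}(c)$ collapses to the single constant $\widehat{w}(0)^{2}|C|/|R|^{2}=|C|(q-1)^{2}q^{2m-4}$, independent of $c$. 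Summed over the $n(n-1)$ ordered pairs and combined with the constant part coming from the diagonal, this gives $n(q^{m-1}-q^{m-2})^{2}|C|\bigl(\tfrac{1}{|R^{\times}|}+n\bigr)$, establishing (2).
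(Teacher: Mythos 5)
The paper itself offers no proof of this statement---it is quoted verbatim from Byrne--Kiermaier--Sneyd \cite[Theorem~13]{FFA-2-weight}---so your argument can only be judged on its own terms. Part (1) is correct: regularity makes each coordinate projection of $C$ a surjective $R$-module homomorphism with fibers of size $|C|/q^{m}$, and $\sum_{r\in R}{\rm wt_{hom}}(r)=\gamma|R|=(q-1)q^{2m-2}$ gives the claimed constant row sum. The architecture of part (2)---reducing $(D^{2})_{uv}$ to $F(u-v)$, splitting into diagonal and off-diagonal coordinate pairs, and treating the off-diagonal part by character sums over the Frobenius ring---is also sound, and the diagonal computation does produce the coefficient $-|C|/q$ in front of ${\rm wt_{hom}}(c)$ as you assert.

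There is, however, a genuine gap at the decisive step of part (2). You claim that regularity and projectivity force $C^{\perp}$ to contain no nonzero vector supported on at most two coordinates, because a two-supported dual codeword ``would force two columns of $G$ to generate the same cyclic $R$-submodule.'' This is false whenever $m\geq 2$: over $\Z_4$ the columns $g=(1,0)^{\rm T}$ and $g'=(1,2)^{\rm T}$ (both occur in the paper's matrix $Y_2$) generate distinct cyclic submodules and each contains a unit, yet $2g=2g'=(2,0)^{\rm T}$, so the corresponding code has $2e_i+2e_j\in C^{\perp}$. Your inference from ``${\rm d_{hom}}(C^{\perp})\geq 3$'' to ``only $v=0$ contributes'' conflates homogeneous weight with Hamming support (this codeword has Lee weight $4$ but support size $2$, consistent with Remark 2.5). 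The gap is repairable inside your own Fourier framework: one computes $\widehat{w}(a)=\beta q^{m}\mathbf{1}_{a=0}+q^{m-1}\mathbf{1}_{a\in\langle\theta\rangle}-q^{m-1}$ with $\beta=(q-1)q^{m-2}$, so that $\widehat{w}(a)=0$ for all $a\in\langle\theta\rangle\setminus\{0\}$. Consequently only pairs $(a,b)$ that are both zero or both units can contribute to $T_{ij}(c)$; for units, $ag_i=0$ is excluded by regularity and $ag_i=bg_j$ by projectivity, so only $(0,0)$ survives and $T_{ij}$ is the constant you computed. Without this vanishing property the off-diagonal sum would genuinely depend on $c$ (as the $\Z_4$ example shows there really are extra dual codewords), so the observation is not optional but is the point where the homogeneous weight, rather than an arbitrary weight, enters the proof.
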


\begin{prop}\label{prop-equ}
The nonzero weights $\omega_1,\omega_2$ of a two-homogeneous weight linear code $C$ of length $n$ over $R$ satisfy the relation
$$(\omega_1+\omega_2)n(q^{m-1}-q^{m-2})|C|=(q^{m-1}-q^{m-2})^2 \left(\frac{n}{q^m-q^{m-1}}+n^2\right)|C|+\omega_1\omega_2(|C|-1).$$
\end{prop}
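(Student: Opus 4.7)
The plan is to exploit the two-weight hypothesis by encoding it as a pointwise polynomial identity satisfied by the off-diagonal entries of the distance matrix $D$, and then extract the desired relation by comparing this to the sum-of-squares and sum-of-entries information already supplied by Theorem \ref{thm-2-equ}. Concretely, since $C$ has only two nonzero homogeneous weights $\omega_1,\omega_2$, every off-diagonal entry $D_{uv}={\rm wt_{hom}}(u-v)$ lies in $\{\omega_1,\omega_2\}$, while $D_{uu}=0$. Therefore $(D_{uv}-\omega_1)(D_{uv}-\omega_2)=0$ whenever $u\neq v$, and summing this scalar identity over all ordered pairs $(u,v)$ with $u\neq v$ yields
$$\sum_{u\neq v} D_{uv}^{2}\;-\;(\omega_1+\omega_2)\sum_{u\neq v} D_{uv}\;+\;\omega_1\omega_2\,|C|(|C|-1)\;=\;0.$$
Because $D_{uu}=0$, the $u\neq v$ restriction can be dropped in the first two sums, so they become $\mathrm{tr}(D^{2})$ (using that $D$ is symmetric) and the total entry sum $\mathbf{1}^{\mathrm{T}}D\mathbf{1}$, respectively.

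Next I would evaluate these two quantities from Theorem \ref{thm-2-equ}. Setting $\alpha=q^{m-1}-q^{m-2}$ and $\beta=|R^{\times}|=q^{m}-q^{m-1}$, part (1) gives $D\mathbf{1}=n\alpha|C|\,\mathbf{1}$, so $\mathbf{1}^{\mathrm{T}}D\mathbf{1}=n\alpha|C|^{2}$. Taking the trace of part (2), and using $\mathrm{tr}(D)=0$ together with $\mathrm{tr}(J)=|C|$, I obtain
$$\mathrm{tr}(D^{2})=n\alpha^{2}|C|^{2}\!\left(\tfrac{1}{\beta}+n\right).$$
Substituting both expressions into the identity from the previous paragraph and dividing through by $|C|$ gives
$$n\alpha^{2}|C|\!\left(\tfrac{1}{\beta}+n\right)-(\omega_1+\omega_2)\,n\alpha|C|+\omega_1\omega_2(|C|-1)=0,$$
which is exactly the claimed relation once rearranged and one recalls $\beta=q^{m}-q^{m-1}$.

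There is not really a hard step here; the only place where care is needed is in the bookkeeping with $D_{uu}=0$, which ensures that $\mathrm{tr}(D^{2})=\sum_{u\neq v}D_{uv}^{2}$ and that no extra $\omega_1\omega_2\,|C|$ term survives from the diagonal when summing the factored identity. Beyond that, the argument is a clean application of the two Theorem \ref{thm-2-equ} identities (row-sum and trace of the quadratic relation for $D$) to the polynomial constraint $(x-\omega_1)(x-\omega_2)=0$ satisfied by the off-diagonal entries of $D$.
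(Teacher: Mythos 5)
Your proof is correct and is essentially the paper's argument in matrix clothing: summing $(D_{uv}-\omega_1)(D_{uv}-\omega_2)=0$ over off-diagonal pairs is exactly the identity $(\omega_1+\omega_2)(A_{\omega_1}\omega_1+A_{\omega_2}\omega_2)=(A_{\omega_1}\omega_1^2+A_{\omega_2}\omega_2^2)+\omega_1\omega_2(A_{\omega_1}+A_{\omega_2})$ that the paper uses, and your evaluations of $\mathbf{1}^{\mathrm{T}}D\mathbf{1}$ and $\mathrm{tr}(D^2)$ from Theorem \ref{thm-2-equ} reproduce (indeed, make explicit) the two power-moment equations the paper simply reads off from that theorem. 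No gaps.
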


\begin{proof}
It is easy to see that
$$A_{\omega_1}+A_{\omega_2}=|C|-A_0=|C|-1.$$
By Theorem \ref{thm-2-equ}, we have
\begin{align*}
 A_{\omega_1}\omega_1+A_{\omega_2}\omega_2&=n(q^{m-1}-q^{m-2})|C|,\\
 A_{\omega_1}\omega_1^2+A_{\omega_2}\omega_2^2&=(q^{m-1}-q^{m-2})^2 \left(\frac{n}{q^m-q^{m-1}}+n^2\right)|C|.
\end{align*}
By the equation $(\omega_1+\omega_2)(A_{\omega_1}\omega_1+A_{\omega_2}\omega_2)=
(A_{\omega_1}\omega_1^2+A_{\omega_2}\omega_2^2)+\omega_1\omega_2(A_{\omega_1}+A_{\omega_2})$, we can obtain the desired result.
\end{proof}

\begin{cor}\label{cor-two-weight}
If $C$ is a Plotkin-optimal regular projective two-homogeneous weight code over $R$ with the nonzero weights $\omega_1<\omega_2$, then $\omega_1=(q^{m-1}-q^{m-2})n$ and $\omega_2=\frac{|C|}{q}.$
\end{cor}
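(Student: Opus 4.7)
The plan is to use Lemma \ref{lemma-P-optimal} to pin down $\omega_1$ and then use the identity in Proposition \ref{prop-equ} as a linear equation in $\omega_2$.

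First I would observe that, since the code has exactly two nonzero homogeneous weights, every nonzero codeword has weight either $\omega_1$ or $\omega_2$. The minimum homogeneous distance of a linear code equals its minimum nonzero weight, so ${\rm d_{hom}}(C)=\omega_1$. Applying Lemma \ref{lemma-P-optimal} to the Plotkin-optimal regular projective code $C$ gives ${\rm d_{hom}}(C)=(q^{m-1}-q^{m-2})n$, hence
$$\omega_1=(q^{m-1}-q^{m-2})n.$$

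Next I would plug this value of $\omega_1$ into the two-weight relation of Proposition \ref{prop-equ}. Writing $\alpha:=q^{m-1}-q^{m-2}$ for brevity and noting $q^m-q^{m-1}=q\alpha$, the identity becomes
$$(\alpha n+\omega_2)\,n\alpha|C|=\alpha^2\!\left(\frac{n}{q\alpha}+n^2\right)|C|+\alpha n\,\omega_2(|C|-1).$$
The $\alpha^2 n^2|C|$ terms on the two sides cancel, and the remaining equation simplifies (after dividing through by $\alpha n$) to
$$\omega_2|C|=\frac{|C|}{q}+\omega_2(|C|-1),$$
from which $\omega_2=|C|/q$ falls out immediately.

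There is no real obstacle here; the statement reduces to substituting the value forced by Plotkin-optimality into a ready-made quadratic-type identity and doing one line of algebra. The only points to be checked carefully are that $\omega_1$ really is the minimum homogeneous distance (which follows because there are just two nonzero weight values) and that the division by $\alpha n$ is legitimate, which holds since $m\ge 2$, $q\ge 2$ and the code has positive length (so that both $\alpha$ and $n$ are nonzero).
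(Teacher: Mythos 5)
Your proof is correct and follows exactly the paper's route: the paper's own proof of this corollary is the one-line remark that it follows from Lemma \ref{lemma-P-optimal} and Proposition \ref{prop-equ}, and you have simply carried out the substitution and algebra that this entails. The verification that $\omega_1={\rm d_{hom}}(C)$ and that dividing by $(q^{m-1}-q^{m-2})n$ is legitimate are exactly the right points to check, and both are handled correctly.
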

\begin{proof}
By Lemma \ref{lemma-P-optimal} and Proposition \ref{prop-equ}, the result follows.
\end{proof}

In order to characterize possible parameters of Plotkin-optimal two-homogeneous weight regular projective linear codes over $R$, we introduce an important lemma.

\begin{lemma}\label{lem-111}
Let $k$ be a positive integer and let ${\bf S_{u}}={\bf u}+\langle\theta^{m-1}\rangle^{k}$ for some ${\bf u}\in R^{k}$. If ${\bf c}\in R^{k}\backslash \langle\theta\rangle^{k}$, then
$$\sum_{{\bf x}\in {\bf S_{u}}}{\rm wt_{hom}}({\bf x}\cdot{\bf c})=
  q^{{k}}(q^{m-1}-q^{m-2}).$$
\end{lemma}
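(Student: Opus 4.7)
The plan is to reduce the sum over the coset $\mathbf{S_u}$ to a sum over the ideal $\langle\theta^{m-1}\rangle$ and then invoke Lemma \ref{A}. Write every $\mathbf{x}\in\mathbf{S_u}$ as $\mathbf{x}=\mathbf{u}+\mathbf{y}$ with $\mathbf{y}\in\langle\theta^{m-1}\rangle^k$, so that $\mathbf{x}\cdot\mathbf{c}=\mathbf{u}\cdot\mathbf{c}+\mathbf{y}\cdot\mathbf{c}$. Since $\langle\theta^{m-1}\rangle$ is an ideal, each $y_ic_i$ lies in $\langle\theta^{m-1}\rangle$, hence $\mathbf{y}\cdot\mathbf{c}\in\langle\theta^{m-1}\rangle$. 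The task therefore splits into two pieces: (i) understand the distribution of $\mathbf{y}\cdot\mathbf{c}$ as $\mathbf{y}$ ranges over $\langle\theta^{m-1}\rangle^k$, and (ii) evaluate the resulting inner sum via Lemma \ref{A}.

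For step (i), I would observe that $\theta$ annihilates $\langle\theta^{m-1}\rangle$ (because $\langle\theta^m\rangle=\{0\}$), so $\langle\theta^{m-1}\rangle$ is naturally an $\mathbb{F}_q$-vector space of dimension $1$, and $\langle\theta^{m-1}\rangle^k$ is an $\mathbb{F}_q$-vector space of dimension $k$. The map
\[
\varphi:\langle\theta^{m-1}\rangle^k\longrightarrow \langle\theta^{m-1}\rangle,\qquad \mathbf{y}\longmapsto \mathbf{y}\cdot\mathbf{c}
\]
is $\mathbb{F}_q$-linear. The hypothesis $\mathbf{c}\in R^k\setminus\langle\theta\rangle^k$ guarantees that some coordinate $c_i$ is a unit; fixing all other $y_j=0$ and letting $y_i$ run through $\langle\theta^{m-1}\rangle$ shows $\varphi$ is surjective (as $y_i\mapsto y_ic_i$ is a bijection of $\langle\theta^{m-1}\rangle$). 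Consequently, every element $z\in\langle\theta^{m-1}\rangle$ has exactly $q^{k-1}$ preimages under $\varphi$.

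For step (ii), the previous paragraph lets me rewrite
\[
\sum_{\mathbf{x}\in\mathbf{S_u}}{\rm wt_{hom}}(\mathbf{x}\cdot\mathbf{c})
=q^{k-1}\sum_{z\in\langle\theta^{m-1}\rangle}{\rm wt_{hom}}(\mathbf{u}\cdot\mathbf{c}+z).
\]
Applying Lemma \ref{A} with $n=1$ and $j=m-1$ (the "vector" being just the scalar $\mathbf{u}\cdot\mathbf{c}\in R$) gives the inner sum as $(q-1)q^{m-1}$. Multiplying by $q^{k-1}$ yields $q^{k-1}(q-1)q^{m-1}=q^k(q^{m-1}-q^{m-2})$, which is exactly the claimed value.

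The only non-routine step is recognizing that $\langle\theta^{m-1}\rangle^k$ behaves as an $\mathbb{F}_q$-vector space on which $\mathbf{c}$ acts $\mathbb{F}_q$-linearly via $\varphi$; once this is in place, surjectivity of $\varphi$ (coming from one coordinate of $\mathbf{c}$ being a unit) immediately gives the uniform fiber size $q^{k-1}$, and the remainder is a direct invocation of the earlier lemma together with the identity $(q-1)q^{m-1}=q^{m}-q^{m-1}=q(q^{m-1}-q^{m-2})$.
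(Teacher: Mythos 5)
Your proof is correct, and it takes a genuinely different (and more uniform) route than the paper's. The paper splits into two cases according to whether $\mathbf{u}\in\langle\theta^{m-1}\rangle^{k}$: in the first case it identifies the sum with the weight of a nonzero codeword of the one-homogeneous-weight code of Lemma \ref{lem-one-weight}, and in the second it argues that $\theta\mathbf{x}$ is constant on the coset, then subdivides again according to whether the products $\mathbf{x}\cdot\mathbf{c}$ are units or lie in $\langle\theta^{m-1}\rangle$, finishing the latter sub-case with a hand count showing the fiber over $0$ has size $q^{k-1}$. You instead observe once and for all that $\varphi:\mathbf{y}\mapsto\mathbf{y}\cdot\mathbf{c}$ is a surjective additive homomorphism from $\langle\theta^{m-1}\rangle^{k}$ onto $\langle\theta^{m-1}\rangle$ (surjectivity coming from a unit coordinate of $\mathbf{c}$), so all fibers have size $q^{k-1}$, and then the whole sum collapses to $q^{k-1}\sum_{z\in\langle\theta^{m-1}\rangle}{\rm wt_{hom}}(\mathbf{u}\cdot\mathbf{c}+z)$, which Lemma \ref{A} with $n=1$, $j=m-1$ evaluates regardless of where $\mathbf{u}\cdot\mathbf{c}$ sits. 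This absorbs all of the paper's case distinctions into the two cases already handled inside Lemma \ref{A}, and the counting in the paper's Case 2 is exactly your uniform-fiber statement in disguise. The only point worth stating explicitly is that equal fiber size needs only that $\varphi$ is a surjective homomorphism of finite abelian groups (fibers are cosets of $\ker\varphi$); the $\F_q$-vector-space structure, while true, is not essential.
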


\begin{proof}
{\bf Step 1.} We consider the case for ${\bf u}\in \langle\theta^{m-1}\rangle^{k}$. Then
$${\bf S_{u}}={\bf u}+\langle\theta^{m-1}\rangle^{k}=\langle\theta^{m-1}\rangle^{k}={\bf S_{0}}.$$
Let $G$ be the ${k}\times (q^{{k}}-1)$ matrix over $\langle\theta^{m-1}\rangle^{k}$ with all distinct nonzero columns. Hence $\mu_G({\bf x})=0$ for ${\bf x}\in R^{k}\backslash \langle\theta^{m-1}\rangle^{k}.$ By Lemma \ref{lem-one-weight}, $G$ generates a one-homogeneous weight linear $[q^{{k}}-1;0,0,\ldots,0,k]_q$ code $C$ over $R$ with the nonzero homogeneous weight $q^{k}(q^{m-1}-q^{m-2})$. Note that ${\bf c}G$ is a nonzero codeword of $C$. Hence
$$q^{k}(q^{m-1}-q^{m-2})={\rm wt_{hom}}({\bf c}G)=\sum_{{\bf x}\in R^k}\mu_{G}({\bf x}){\rm wt_{hom}}({\bf x}\cdot{\bf c})=\sum_{{\bf x}\in {\bf S_{0}}}{\rm wt_{hom}}({\bf x}\cdot{\bf c}).$$

{\bf Step 2.} Suppose that ${\bf u}\notin \langle\theta^{m-1}\rangle^k.$ This implies that ${\bf 0}\notin {\bf S_u}$. Note that $\theta{\bf x}=\theta{\bf y}=\theta{\bf u}$ for any ${\bf x},{\bf y}\in {\bf S_u}$. We consider two cases.

Case 1. If there exists ${\bf x}\in {\bf S_u}$ such that ${\rm wt_{hom}}({\bf x}\cdot {\bf c})=q^{m-1}-q^{m-2}$, then for any ${\bf y}\in {\bf S_u}$ we have $\theta{\bf y}\cdot {\bf c}=\theta{\bf x}\cdot {\bf c}\neq 0$, $i.e.$, ${\bf y}\cdot {\bf c}\notin \langle\theta^{m-1}\rangle$. So ${\rm wt_{hom}}({\bf y}\cdot {\bf c})=q^{m-1}-q^{m-2}$ for any ${\bf y}\in {\bf S_u}$. Hence we have
$$\sum_{{\bf x}\in {\bf S_{u}}}{\rm wt_{hom}}({\bf x}\cdot{\bf c})=(q^{m-1}-q^{m-2})|{\bf S_u}|=q^k(q^{m-1}-q^{m-2}).$$

Case 2. If there exists ${\bf x}\in {\bf S_u}$ such that ${\rm wt_{hom}}({\bf x}\cdot {\bf c})\neq q^{m-1}-q^{m-2}$, $i.e.$, ${\rm wt_{hom}}({\bf x}\cdot {\bf c})\in\{0, q^{m-1}\}$, then we have ${\rm wt_{hom}}({\bf y}\cdot {\bf c})\in\{0, q^{m-1}\}$ for any ${\bf y}\in {\bf S_u}$ by Case 1. This implies that ${\bf y}\cdot {\bf c}\in \langle\theta^{m-1}\rangle$ for any ${\bf y}\in {\bf S_u}$ by the definition of the homogeneous weight.
Since ${\bf c}\in R^k\backslash \langle\theta\rangle^k$, there exists $1\leq i_0\leq k$ such that $c_{i_0}\in R\backslash \langle\theta\rangle$, where $c_{i_0}$ is the $i_0$-th coordinate component of ${\bf c}$. Let ${\bf e}_{i_0}=(e_{i_01},e_{i_02},\ldots,e_{i_0k})$, where $e_{i_0j}=\delta_{i_0j}$ for $1\leq j\leq k$.

Let $A:=\{{\bf x}\in {\bf S_u}~|~{\rm wt_{hom}}({\bf x}\cdot {\bf c})=0\}$ and $B:=\{{\bf x}\in {\bf S_u}~|~{\rm wt_{hom}}({\bf x}\cdot {\bf c})=q^{m-1}\}$.
For any fixed ${\bf x}\in {\bf S_u}$, it can be checked that ${\bf x}+a{\bf e}_{i_0}\in {\bf S_u}$ for any $a\in \langle\theta^{m-1}\rangle$ and
$$\{({\bf x}+a{\bf e}_{i_0})\cdot {\bf c}~|~a\in \langle\theta^{m-1}\rangle\}=
\{{\bf x}\cdot {\bf c}+a~|~a\in \langle\theta^{m-1}\rangle\}=
\{a~|~a\in \langle\theta^{m-1}\rangle\}.$$
Thus, there is exactly one element of weight 0 in set $\{({\bf x}+a{\bf e}_{i_0})\cdot {\bf c}~|~a\in \langle\theta^{m-1}\rangle\}$.
It follows that $|B|=(q-1)|A|$. Hence ${\bf S_u}=A\cup B$ and $A=\frac{|{\bf S_u}|}{q}=q^{k-1}$ and $B=q^k-q^{k-1}$. It turns out that
\begin{align*}
 \sum_{{\bf x}\in {\bf S_{u}}}{\rm wt_{hom}}({\bf x}\cdot{\bf c})& =\sum_{{\bf x}\in A}{\rm wt_{hom}}({\bf x}\cdot{\bf c})+\sum_{{\bf x}\in B}{\rm wt_{hom}}({\bf x}\cdot{\bf c}) \\
   & =q^{m-1}|B|\\
   &=q^{m-1}(q^k-q^{k-1})\\
   &=q^k(q^{m-1}-q^{m-2}).
\end{align*}
This completes the proof.
\end{proof}

\begin{cor}\label{cor-111}
Let $C$ be a regular linear $[n;k_0,k_1,\ldots,k_{m-1}]_q$ code over $R$ with a generator matrix $G$. Let $k'=k_0+k_1+\cdots+k_{m-1}$ and ${\bf S_{u}}={\bf u}+\langle\theta^{m-1}\rangle^{k'}$ for some ${\bf u}\in R^{k'}$. Then
$$\sum_{{\bf x}\in {\bf S_u}}{\rm wt_{hom}}({\bf x}G)=
  q^{k'}(q^{m-1}-q^{m-2})n.
$$
\end{cor}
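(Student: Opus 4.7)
The plan is to reduce the total weight $\mathrm{wt_{hom}}(\mathbf{x}G)$ to a sum of homogeneous weights of scalar inner products, swap the order of summation, and then apply Lemma~\ref{lem-111} column by column. Write $G=[\mathbf{g}_1\mid\mathbf{g}_2\mid\cdots\mid\mathbf{g}_n]$, so that $G$ is a $k'\times n$ matrix (the generator matrix has $k'=k_0+k_1+\cdots+k_{m-1}$ rows, matching the dimension of the ambient space $R^{k'}$ from which $\mathbf{x}$ is drawn). Then for every $\mathbf{x}\in R^{k'}$ one has $\mathrm{wt_{hom}}(\mathbf{x}G)=\sum_{i=1}^{n}\mathrm{wt_{hom}}(\mathbf{x}\cdot\mathbf{g}_i)$.

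The key step is to verify that each column $\mathbf{g}_i$ satisfies the hypothesis of Lemma~\ref{lem-111}, namely that $\mathbf{g}_i\in R^{k'}\setminus\langle\theta\rangle^{k'}$. This is exactly what the regularity of $C$ gives us: by definition, $\{\mathbf{x}\cdot\mathbf{g}_i\mid \mathbf{x}\in R^{k'}\}=R$, and the set of inner products $\mathbf{x}\cdot\mathbf{g}_i$ as $\mathbf{x}$ ranges over $R^{k'}$ is precisely the ideal of $R$ generated by the entries of $\mathbf{g}_i$. For this ideal to be all of $R$, at least one entry of $\mathbf{g}_i$ must be a unit, i.e.\ lie outside $\langle\theta\rangle$; equivalently, $\mathbf{g}_i\notin\langle\theta\rangle^{k'}$.

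Having verified the hypothesis, the computation finishes in two lines. Interchanging the order of summation and applying Lemma~\ref{lem-111} to each column yields
\[
\sum_{\mathbf{x}\in\mathbf{S_u}}\mathrm{wt_{hom}}(\mathbf{x}G)
=\sum_{i=1}^{n}\sum_{\mathbf{x}\in\mathbf{S_u}}\mathrm{wt_{hom}}(\mathbf{x}\cdot\mathbf{g}_i)
=\sum_{i=1}^{n}q^{k'}(q^{m-1}-q^{m-2})
=q^{k'}(q^{m-1}-q^{m-2})n,
\]
which is the desired identity. The only real obstacle is the translation between the abstract definition of regularity and the concrete condition on the columns $\mathbf{g}_i$; once that is observed, the corollary is an immediate consequence of Lemma~\ref{lem-111}.
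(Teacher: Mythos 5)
Your proof is correct and follows essentially the same route as the paper: the paper writes the sum over columns via the multiplicity function $\mu_G$ and uses regularity to conclude $\mu_G(\mathbf{c})=0$ for $\mathbf{c}\in\langle\theta\rangle^{k'}$, which is exactly your observation that every column lies in $R^{k'}\setminus\langle\theta\rangle^{k'}$, after which both arguments swap the order of summation and apply Lemma~\ref{lem-111} columnwise.
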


\begin{proof}
Since $C$ is regular, $\mu_G({\bf c})=0$ for ${\bf c}\in \langle\theta\rangle^{k'}$. By Lemma \ref{lem-111}, we have
\begin{align*}
  \sum_{{\bf x}\in {\bf S_u}}{\rm wt_{hom}}({\bf x}G) & =
\sum_{{\bf x}\in {\bf S_u}}\sum_{{\bf c}\in R^{k'}}\mu_{G}({\bf c}){\rm wt_{hom}}({\bf c}\cdot{\bf x}) \\
   & =\sum_{{\bf c}\in R^{k'}}\mu_{G}({\bf c})\left(\sum_{{\bf x}\in {\bf S_u}}{\rm wt_{hom}}({\bf c}\cdot{\bf x})\right)\\
   &= q^{k'}(q^{m-1}-q^{m-2})n.
\end{align*}
This completes the proof.
\end{proof}

\begin{cor}\label{cor-222}
Let $C$ be a regular projective linear $[n;k_0,k_1,\ldots,k_{m-1}]_q$ code over $R$ with a generator matrix $G$. Let $k'=k_0+k_1+\cdots+k_{m-1}$. If $C$ is Plotkin-optimal, then
${\rm wt_{hom}}({\bf x}G)=(q^{m-1}-q^{m-2})n$ for all ${\bf x}\in R^{k'}\backslash\langle\theta^{m-1}\rangle^{k'}$.
\end{cor}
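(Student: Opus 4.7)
The plan is to apply the coset-averaging identity from Corollary~\ref{cor-111} directly. Fix any ${\bf x}\in R^{k'}\setminus\langle\theta^{m-1}\rangle^{k'}$ and set $w:=(q^{m-1}-q^{m-2})n$ and ${\bf S}_{\bf x}:={\bf x}+\langle\theta^{m-1}\rangle^{k'}$. Because ${\bf x}\notin\langle\theta^{m-1}\rangle^{k'}$, the coset ${\bf S}_{\bf x}$ has exactly $q^{k'}$ elements and does not contain ${\bf 0}$. Plugging ${\bf u}={\bf x}$ into Corollary~\ref{cor-111} yields
\[
\sum_{{\bf y}\in {\bf S}_{\bf x}}{\rm wt_{hom}}({\bf y}G)=q^{k'}(q^{m-1}-q^{m-2})n=q^{k'}w,
\]
so the average weight across the coset is exactly $w$.

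Next I would invoke Plotkin-optimality through Lemma~\ref{lemma-P-optimal}, which forces ${\rm d_{hom}}(C)=w$, and hence every nonzero codeword of $C$ has homogeneous weight at least $w$. The conclusion of the corollary would then follow immediately from the averaging identity once one knows that ${\bf y}G\neq {\bf 0}$ for every ${\bf y}\in {\bf S}_{\bf x}$: the left-hand side would be a sum of $q^{k'}$ quantities each at least $w$ whose total equals $q^{k'}w$, so equality must hold term by term, and in particular ${\rm wt_{hom}}({\bf x}G)=w$.

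The main obstacle is therefore the intermediate nonvanishing claim. The assumption ${\bf x}\notin\langle\theta^{m-1}\rangle^{k'}$ rules out the trivial case ${\bf y}={\bf 0}$, but a nonzero element of the kernel of the surjection $R^{k'}\to C$, ${\bf y}\mapsto {\bf y}G$, could a priori lie inside ${\bf S}_{\bf x}$. To handle this I would put $G$ into the standard block-triangular form recalled in Section~2 and analyze the resulting kernel block-by-block; combining the shape of the kernel with the regularity hypothesis (each column of $G$ contains a unit, already used to prove Corollary~\ref{cor-111}) and the projectivity hypothesis (distinct columns of $G$ generate distinct cyclic $R$-submodules) should pin the kernel down inside a submodule disjoint from ${\bf S}_{\bf x}$ for every ${\bf x}$ satisfying the hypothesis of the corollary. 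Once this kernel disjointness is in place the averaging identity closes the argument instantaneously, so essentially all of the genuine work goes into this kernel analysis.
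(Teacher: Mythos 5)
Your first half reproduces the paper's proof exactly: apply Corollary~\ref{cor-111} to the coset ${\bf S}_{\bf x}$, use Lemma~\ref{lemma-P-optimal} to get ${\rm d_{hom}}(C)=(q^{m-1}-q^{m-2})n=:w$, and squeeze $q^{k'}w=\sum_{{\bf y}\in{\bf S}_{\bf x}}{\rm wt_{hom}}({\bf y}G)\geq |{\bf S}_{\bf x}|\,{\rm d_{hom}}(C)=q^{k'}w$ to force every summand to equal $w$. The paper writes the middle inequality with no further comment, i.e.\ it tacitly assumes ${\bf y}G\neq{\bf 0}$ for every ${\bf y}$ in the coset; you have correctly isolated this as the one point needing justification.

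The difficulty is that the kernel analysis you defer the work to cannot succeed. With $G$ in the standard block form of Section~2, the kernel of ${\bf y}\mapsto{\bf y}G$ on $R^{k'}$ is read off block by block from the pivots $\theta^{j}I_{k_j}$ to be $\{0\}^{k_0}\times\langle\theta^{m-1}\rangle^{k_1}\times\langle\theta^{m-2}\rangle^{k_2}\times\cdots\times\langle\theta\rangle^{k_{m-1}}$, and regularity and projectivity, being conditions on the \emph{columns} of $G$, impose nothing further on it. This kernel is contained in $\langle\theta^{m-1}\rangle^{k'}$ precisely when $m=2$ or $k_2=\cdots=k_{m-1}=0$; otherwise the disjointness you hope to establish is simply false. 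Concretely, for the Plotkin-optimal regular projective $[192;3,0,1]$ code over $\Z_{2^3}$ constructed in Section~4, the vector ${\bf x}=(0,0,0,2)$ lies outside $\langle 4\rangle^{4}$ yet satisfies ${\bf x}G={\bf 0}$, so ${\rm wt_{hom}}({\bf x}G)=0\neq 384$ and even the literal conclusion of the corollary fails at such an ${\bf x}$. The statement is only tenable as a claim about ${\bf x}$ taken modulo this kernel (equivalently, about the nonzero codewords ${\bf y}G$ with ${\bf y}\notin\langle\theta^{m-1}\rangle^{k'}$), which is all the subsequent application requires; neither your sketch nor the paper's one-line inequality addresses this, and the route you propose for closing the gap is a dead end.
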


\begin{proof}
Assume that ${\bf S_u}={\bf u}+ \langle\theta^{m-1}\rangle^{k'}$, where ${\bf u}\notin \langle\theta^{m-1}\rangle^{k'}$. Since $C$ is Plotkin-optimal, ${\rm d_{hom}}(C)=(q^{m-1}-q^{m-2})n$ by Lemma \ref{lemma-P-optimal}.
By Corollary \ref{cor-111}, we have
$$q^{k'}(q^{m-1}-q^{m-2})n=\sum_{{\bf x}\in {\bf S_u}}{\rm wt_{hom}}({\bf x}G)\geq |{\bf S_u}|{\rm d_{hom}}(C)=q^{k'}(q^{m-1}-q^{m-2})n.
.$$
Hence ${\rm wt_{hom}}({\bf x}G)=(q^{m-1}-q^{m-2})n$ for any ${\bf x}\in {\bf S_u}$. Since ${\bf u}$ is arbitrary and ${\bf u}\notin \langle\theta^{m-1}\rangle^{k'}$, ${\rm wt_{hom}}({\bf x}G)=(q^{m-1}-q^{m-2})n$ for all ${\bf x}\in R^{k'}\backslash\langle\theta^{m-1}\rangle^{k'}$.
\end{proof}

The following theorem shows possible parameters of Plotkin-optimal two-homogeneous weight regular projective codes over a finite chain ring.

\begin{theorem}
Let $C$ be a Plotkin-optimal two-homogeneous weight regular projective linear $[n;k_0,k_1,\ldots,k_{m-1}]_q$ code over $R$ with a generator matrix $G$. Then $n=\frac{q^{k-t}(q^t-1)}{q^m-q^{m-1}}$ for some integer $1\leq t\leq k_0$, where $k=mk_0+(m-1)k_1+\cdots+k_{m-1}$ is the $q$-dimension of $C$. Moreover, the homogeneous weight distribution is as follows
$$
\begin{array}{ll}
\hline
{\rm Homogeneous\ weight}\ \omega &  {\rm Multiplicity}  \\ \hline
0 & 1 \\
q^{k-t-1}(q^t-1) & q^k-q^t  \\
q^{k-1} & q^t-1\\ \hline
\end{array}
$$
\end{theorem}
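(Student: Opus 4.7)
The plan is to first determine the two nonzero weights from Corollary \ref{cor-two-weight}, extract the multiplicities and the form of $n$ from elementary moment identities plus an integrality constraint, and finally bound the parameter $t$ using the structural rigidity supplied by Corollary \ref{cor-222}.

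By Corollary \ref{cor-two-weight} the two nonzero weights are $\omega_1=(q^{m-1}-q^{m-2})n$ and $\omega_2=q^{k-1}$. Writing $N_1$ and $N_2$ for their multiplicities, the identity $N_1+N_2=q^k-1$ together with the first-moment relation $N_1\omega_1+N_2\omega_2=n(q^{m-1}-q^{m-2})|C|=\omega_1 q^k$ (which is Theorem \ref{thm-2-equ}(1) read on a single row of $D$, as used in the proof of Proposition \ref{prop-equ}) forces $N_2=\omega_1/(\omega_2-\omega_1)$. Since this must be a positive integer, $\omega_2-\omega_1$ divides $\omega_2=q^{k-1}$, so $\omega_2-\omega_1=q^{k-1-t}$ for some integer $t\geq 1$. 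Then $\omega_1=q^{k-1-t}(q^t-1)$, $N_2=q^t-1$ and $N_1=q^k-q^t$, and using $\omega_1=n(q-1)q^{m-2}$ to solve for $n$ yields the desired formula $n=q^{k-t}(q^t-1)/(q^m-q^{m-1})$.

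For the upper bound $t\leq k_0$, I would argue as follows. By Corollary \ref{cor-222}, every $\mathbf{x}\in R^{k'}\setminus\langle\theta^{m-1}\rangle^{k'}$ (where $k'=k_0+k_1+\cdots+k_{m-1}$) gives $\mathrm{wt}_{\mathrm{hom}}(\mathbf{x}G)=\omega_1$, so all $N_2$ codewords of weight $\omega_2$ are images under $\mathbf{x}\mapsto\mathbf{x}G$ of vectors in $\langle\theta^{m-1}\rangle^{k'}$, and hence lie in the subcode $\theta^{m-1}C$. A quick look at the standard form of $G$ shows that multiplication by $\theta^{m-1}$ kills every block row past the first (those with leading entries in $\langle\theta\rangle$) while leaving the top $k_0$ rows $\mathbb{F}_q$-linearly independent (since the leading block is $I_{k_0}$), giving $|\theta^{m-1}C|=q^{k_0}$. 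Hence $N_2\leq q^{k_0}-1$, i.e., $q^t-1\leq q^{k_0}-1$, so $t\leq k_0$. The claimed weight distribution is now immediate from $N_0=1$, $N_1=q^k-q^t$ and $N_2=q^t-1$. The only mildly delicate step is the count $|\theta^{m-1}C|=q^{k_0}$; everything else reduces to bookkeeping once $\omega_1$ and $\omega_2$ are in hand.
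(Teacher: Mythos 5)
Your overall strategy matches the paper's: pin down $\omega_1=(q^{m-1}-q^{m-2})n$ and $\omega_2=q^{k-1}$ via Corollary~\ref{cor-two-weight}, extract $N_1,N_2$ and $n$ from counting identities, and then bound $t$ by $k_0$. Two steps diverge in ways worth recording. First, you obtain the multiplicities from the first power moment $N_1\omega_1+N_2\omega_2=\omega_1q^k$ alone, whereas the paper sums ${\rm wt_{hom}}({\bf x}G)$ over ${\bf x}\in\langle\theta^{m-1}\rangle^{k'}$ using Corollaries~\ref{cor-111} and~\ref{cor-222}; both yield the same linear relation, and yours is the more economical. Second, for $t\le k_0$ the paper simply substitutes $n=q^{k-t}(q^t-1)/(q^m-q^{m-1})$ into the length bound of Lemma~\ref{lemma-n}, while you localize the $N_2$ heavy codewords inside $\theta^{m-1}C$ via Corollary~\ref{cor-222} and count $|\theta^{m-1}C|=q^{k_0}$. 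Your route is valid (the count is right: multiplying the standard-form generator matrix by $\theta^{m-1}$ annihilates every block row below the first and leaves the top $k_0$ rows independent modulo $\langle\theta\rangle$), and it yields the extra structural fact that all heavy codewords lie in $\theta^{m-1}C$; the paper's route is shorter.

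There is, however, one genuine gap. From $N_2=\omega_1/(\omega_2-\omega_1)\in\mathbb{Z}$ you correctly deduce that $\omega_2-\omega_1$ divides $q^{k-1}$, but you then conclude $\omega_2-\omega_1=q^{k-1-t}$ for an integer $t$. When $q=p^r$ with $r>1$ (which occurs for genuine chain rings, e.g.\ Galois rings), a divisor of $q^{k-1}=p^{r(k-1)}$ is only guaranteed to be a power of $p$, say $p^j$, and nothing in your argument forces $r\mid j$. The paper closes exactly this gap with an additional divisibility observation: for $a$ ranging over the $q-1$ nonzero Teichm\"uller elements, ${\rm wt_{hom}}(a{\bf c})={\rm wt_{hom}}({\bf c})$ and the multiples $a{\bf c}$ are pairwise distinct, so $(q-1)\mid A_{\omega_2}$; since the multiplicative order of $p$ modulo $q-1$ is exactly $r$, the condition $p^j\equiv 1\pmod{q-1}$ then forces $p^j$ to be a power of $q$. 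Your argument can be patched the same way (note $q-1$ divides $\omega_1=(q-1)q^{m-2}n$, hence $p^j=\omega_2-\omega_1\equiv q^{k-1}\equiv 1\pmod{q-1}$), but as written the step is only complete for prime $q$.
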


\begin{proof}
Let $T$ be the Teichm${\rm \ddot{ u}}$ller set of $R$. Then $0\in T$ and $|T|=q$.
Let $\omega_1<\omega_2$ be the two nonzero weights of $C$.
By Corollary \ref{cor-two-weight}, we have $\omega_1=(q^{m-1}-q^{m-2})n$ and $\omega_2=\frac{|C|}{q}$. By Corollarys \ref{cor-111} and \ref{cor-222}, we have
$$q^{k'}(q^{m-1}-q^{m-2})n=\sum_{{\bf x}\in {\bf S_0}}{\rm wt_{hom}}({\bf x}G)=A_{\omega_2}\omega_2+(q^{k'}-1-A_{\omega_2})\omega_1.
$$
From the above equation, we have $n=\frac{A_{\omega_2}|C|}{(A_{\omega_2}+1)(q^m-q^{m-1})}$. This implies that $A_{\omega_2}+1$ divides $A_{\omega_2}|C|$, and it turns out that $A_{\omega_2}+1$ divides $|C|$. For any ${\bf c}\in C$ and $a\in T\backslash \{0\}$, we have $a{\bf c}\in C$ and ${\rm wt_{hom}} (a{\bf c})={\rm wt_{hom}} ({\bf c})$. Hence $A_{\omega_2}$ is divisible by $q-1$. We conclude that $A_{\omega_2}+1=q^t$ for some integer $t$. Hence $t\geq 1$, otherwise $A_{\omega_2}=0$, which is a contradiction. Let $k=mk_0+(m-1)k_1+\cdots+k_{m-1}$ be the $q$-dimension of $C$. By Lemma \ref{lemma-n}, we have
$$\frac{q^{k}-q^{k-k_0}}{q^m-q^{m-1}}\geq n=\frac{(q^t-1)q^k}{q^t(q^m-q^{m-1})},$$
which implies that $1\leq t\leq k_0$. Therefore, $n=\frac{q^{k-t}(q^t-1)}{q^m-q^{m-1}}$, $\omega_1=(q^{m-1}-q^{m-2})n=q^{k-t-1}(q^t-1)$, $\omega_2=\frac{|C|}{q}=q^{k-1}$, $A_{\omega_1}=|C|-A_{\omega_2}-1=q^k-q^t$ and $A_{\omega_2}=q^t-1$.
\end{proof}

\subsection{The construction of Plotkin-optimal two-homogeneous weight regular projective linear codes over $R$}
In this section, we will give the construction of two-homogeneous weight codes and study some properties of these codes.
Let $R=\{0,r_1,r_2,\ldots,r_{p^m-1}\}$ be a finite chain ring of the depth $m$ with the residue field $\F_q$ and the maximum idea $ \langle \theta\rangle$, where $m\geq 2$.
Let $k$ be a positive integer.
We define the matrix $Y_k$ by inductive constructions as follows:
\begin{center}
$
 Y_1=[1]$ and $Y_k=\begin{bmatrix}
    \begin{array}{ccccc|c}
    Y_{k-1} & Y_{k-1} &Y_{k-1} & \cdots & Y_{k-1} &B_{k-1} \\
    \textbf 0 & r_1\textbf 1& r_2\textbf 1 & \cdots & r_{q^{m}-1}\textbf 1 & \textbf 1'
    \end{array}
    \end{bmatrix},
    $
    \end{center}
where $B_{k-1}$ is a $(k-1)\times q^{(m-1)(k-1)}$ matrix over $\langle \theta\rangle$ consisting of all different columns, $\textbf 0$ is the all-zero vector of length $\frac{q^{m(k-1)}-q^{(m-1)(k-1)}}{q^m-q^{m-1}}$, $\textbf 1$ is the all-one vector of length $\frac{q^{m(k-1)}-q^{(m-1)(k-1)}}{q^m-q^{m-1}}$, $\textbf 1'$ is the all-one row vector of length $q^{(m-1)(k-1)}$.
It can be checked that $Y_k$ is a $k\times \frac{q^{mk}-q^{(m-1)k}}{q^m-q^{m-1}}$ matrix over $R$.

\begin{prop}\label{y_k}
Suppose that $k$ is a positive integer. Let $C$ be a linear code with the generator matrix $Y_k$. Then $C$ is a free Plotkin-optimal two-homogeneous weight code of length $n=\frac{q^{mk}-q^{(m-1)k}}{q^m-q^{m-1}}$. Moreover, the homogeneous weight distribution is as follows.
$$
\begin{array}{ll}
\hline
{\rm Homogeneous\ weight}\ \omega &  {\rm Multiplicity}  \\ \hline
0 & 1 \\
q^{mk-1}-q^{(m-1)k-1} & q^{mk}-q^k  \\
q^{mk-1} & q^k-1\\ \hline
\end{array}
$$
\end{prop}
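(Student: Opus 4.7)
The plan is to induct on $k$. The base case $k=1$ is direct: $Y_1=[1]$ generates the code $R$ of length $1$, and the three homogeneous weight classes $\{0\}$, $\langle\theta^{m-1}\rangle\setminus\{0\}$, and $R\setminus\langle\theta^{m-1}\rangle$, of sizes $1$, $q-1$, $q^m-q$ and homogeneous weights $0$, $q^{m-1}$, $q^{m-1}-q^{m-2}$, match the claimed distribution for $n_1=1$.

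For the inductive step, I write $\mathbf{x}\in R^k$ as $(\mathbf{x}',x_k)$ and split the weight of $\mathbf{x}Y_k$ according to the block structure of $Y_k$ into a contribution $P_1$ from the $q^m$ column blocks built from $Y_{k-1}$ (whose new row contributes $x_k r_j\mathbf{1}$) and a contribution $P_2$ from the terminal block $\mathbf{x}'B_{k-1}+x_k\mathbf{1}'$. Since $\{x_k r_j\}_{j=0}^{q^m-1}$ equals $\langle\theta^i\rangle$ with uniform multiplicity $q^i$ whenever $x_k\in\langle\theta^i\rangle\setminus\langle\theta^{i+1}\rangle$, Lemma \ref{A} collapses $P_1$ to $(q-1)q^{2m-2}n_{k-1}$ for every $x_k\neq 0$, independently of $\mathbf{x}'$; for $x_k=0$ we simply have $P_1=q^m\,{\rm wt_{hom}}(\mathbf{x}'Y_{k-1})$, which is handled by the induction hypothesis.

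For $P_2$, the key observation is that the $R$-linear map $\mathbf{b}\mapsto\mathbf{x}'\cdot\mathbf{b}$ sends $\langle\theta\rangle^{k-1}$ onto $\langle\theta^{i_0+1}\rangle$, where $i_0$ is the $\theta$-order of $\mathbf{x}'$ (with $i_0=m$ interpreted as $\mathbf{x}'=0$), and all fibers have the same size. A second application of Lemma \ref{A} then yields a closed form for $P_2$ depending only on $i_0$ and on whether $x_k$ lies in $\langle\theta^{m-1}\rangle$. Assembling $P_1+P_2$ across the sub-cases parametrized by the dichotomies ``$\mathbf{x}'\in\langle\theta^{m-1}\rangle^{k-1}$?'' and ``$x_k\in\langle\theta^{m-1}\rangle$?'' produces in every instance one of the three values $0$, $q^{mk-1}-q^{(m-1)k-1}$, or $q^{mk-1}$, correlating with the trichotomy $\mathbf{x}=0$, $\mathbf{x}\notin\langle\theta^{m-1}\rangle^k$, $\mathbf{x}\in\langle\theta^{m-1}\rangle^k\setminus\{0\}$.

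The main obstacle lies in keeping the bookkeeping clean across the exceptional regimes $x_k=0$ in $P_1$ and $i_0\in\{m-1,m\}$ in $P_2$, which fall outside the uniform ``$x_k\neq 0$, $i_0\leq m-2$'' computation and must be checked separately to confirm they still contribute to the same trichotomy. Once the weight enumerator is established, freeness follows because weight $0$ occurs only at $\mathbf{x}=0$, so the generator map is injective and $|C|=q^{mk}$; Plotkin-optimality is then immediate from Lemma \ref{lemma-P-optimal}, since $q^{mk-1}-q^{(m-1)k-1}=(q^{m-1}-q^{m-2})n_k$, and the multiplicities $q^{mk}-q^k$ and $q^k-1$ agree with $|R^k\setminus\langle\theta^{m-1}\rangle^k|$ and $|\langle\theta^{m-1}\rangle^k\setminus\{0\}|$.
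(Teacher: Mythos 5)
Your argument is correct, but it proceeds by a genuinely different route from the paper's. The paper avoids any induction on the weight computation: it embeds $Y_k$ into the augmented matrix $S_k=\left[[\alpha Y_k]_{\alpha\in R^\times}\mid B_k\right]$, uses the regularity and projectivity of $Y_k$ (Theorem \ref{Y_k projective}) to conclude by a counting argument that the columns of $S_k$ exhaust $R^k$, so that $S'_k$ generates the one-homogeneous-weight code of Lemma \ref{lem-one-weight}; it then solves
${\rm wt_{hom}}(\textbf c)=\bigl(q^{mk}(q^{m-1}-q^{m-2})-{\rm wt_{hom}}(\textbf b)\bigr)/(q^m-q^{m-1})$,
where $\textbf b$ ranges over the one-weight code generated by $B'_k$, and the two possible values of ${\rm wt_{hom}}(\textbf b)$ immediately give $\omega_1,\omega_2$ and their multiplicities. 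Your induction on $k$ through the block structure, with Lemma \ref{A} collapsing the $q^m$ replicated blocks and the uniform-fiber argument handling $\mathbf{x}'B_{k-1}+x_k\mathbf{1}'$, is more computational but buys something the paper's proof does not give directly: you establish the pointwise trichotomy (${\rm wt_{hom}}(\mathbf{x}Y_k)=\omega_2$ exactly when $\mathbf{x}\in\langle\theta^{m-1}\rangle^k\setminus\{\mathbf 0\}$), which is precisely the content of the Corollary the paper states \emph{after} Proposition \ref{y_k}; note that your induction hypothesis must be this stronger trichotomy, not merely the weight enumerator, since the sub-case $x_k=0$ requires knowing which $\mathbf{x}'$ realize which weight of $Y_{k-1}$. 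One point to make explicit in both approaches: Lemma \ref{lemma-P-optimal} applies only to regular projective codes, so the final Plotkin-optimality claim still requires importing Theorem \ref{Y_k projective} (whose proof is an independent induction and does not rely on Proposition \ref{y_k}), exactly as the paper does.
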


\begin{proof}
Consider the matrix $S_k$ with the block form
$$S_k=\left[[\alpha Y_k]_{\alpha\in R^\times}|~ B_k\right].$$
Then $S_k$ is a $k\times q^{mk}$ matrix over $R$. By Theorem \ref{Y_k projective}, we know that the code generated by $Y_k$ is regular and projective, that is,
$$\mu_{Y_k}({\bf x})=0~{\rm for}~{\bf x}\in \langle \theta\rangle^k~{\rm and}~
\sum_{\alpha\in R^\times}\mu_{Y_k}(\alpha {\bf x})\leq 1~ {\rm for}~{\bf x}\in R^k\backslash \langle \theta\rangle^k.$$
It can be checked that
$$\mu_{S_k}({\bf x})=1~{\rm for}~{\bf x}\in \langle \theta\rangle^k~{\rm and}~
\mu_{S_k}({\bf x})\leq 1~{\rm for}~{\bf x}\in R^k\backslash\langle \theta\rangle^k.$$
Hence
$$q^{mk}=\sum_{{\bf x}\in R^k}\mu_{S_k}({\bf x})=\sum_{{\bf x}\in \langle \theta\rangle^k}\mu_{S_k}({\bf x})+\sum_{{\bf x}\in R^k\backslash \langle \theta\rangle^k}\mu_{S_k}({\bf x})\leq q^{(m-1)k}+(q^{mk}-q^{(m-1)k})\leq q^{mk}.$$
It turns out that $\mu_{S_k}({\bf x})=1~{\rm for}~{\bf x}\in R^k$, that is, all columns of $S_k$ are pairwise different. If we delete the zero column from $S_k$ (in fact we delete the zero column from $B_k$), then we get a $k\times (q^{mk}-1)$ matrix $S'_k$ over $R$. Therefore, $S'_k$ is a matrix whose columns are all nonzero vectors of $R^k$.

Let $\mathcal{S}'_k$ be a code with the generator matrix $S'_k$. It is easy to see that $\mathcal{S}'_k$ is a special case of the code in Lemma \ref{lem-one-weight}, when $k_1=k, k_2=k_3=\cdots=k_m=0$. Then $\mathcal{S}'_k$ is a one-homogeneous weight linear code of length $q^{mk}-1$ with the nonzero homogeneous weight $\omega'=q^{mk}(q^{m-1}-q^{m-2})$.

Let $B_k'$ denote the matrix which is obtained by deleting the zero column from $B_k$. Let $\mathcal{B}'_k$ be a code with the generator matrix $B'_k$. It is easy to check that $\mathcal{B}'_k$ is a special case of the code in Lemma \ref{lem-one-weight}, when $k_1=0, k_2=k,k_3=\cdots=k_{m}=0$. Then $\mathcal{B}'_k$ is a one-homogeneous weight linear code of length $q^{(m-1)k}-1$ with the nonzero homogeneous weight $\omega''=q^{(m-1)k}(q^{m-1}-q^{m-2})$.

Let $\textbf s=((\alpha\textbf c)_{\alpha\in R^\times}, \textbf b)\in \mathcal{S}'_k$ be a nonzero codeword, where $\textbf c \in C$ and $\textbf b \in \mathcal{B}'_k$. Then we obtain
$${\rm wt_{hom}}(\textbf s)=\sum_{\alpha\in R^\times}{\rm wt_{hom}}(\alpha \textbf c)+{\rm wt_{hom}}(\textbf b)=(q^m-q^{m-1}){\rm wt_{hom}}(\textbf c)+{\rm wt_{hom}}(\textbf b).$$
Moreover, we have
$${\rm wt_{hom}}(\textbf c)=\frac{{\rm wt_{hom}}(\textbf s)-{\rm wt_{hom}}(\textbf b)} {q^m-q^{m-1}}=\frac{q^{mk}(q^{m-1}-q^{m-2})-{\rm wt_{hom}}(\textbf b)}{q^m-q^{m-1}}.$$
Since the code $\mathcal{B}'_k$ is a one-homogeneous weight code with the nonzero homogeneous weight $\omega''=q^{(m-1)k}(q^{m-1}-q^{m-2})$, ${\rm wt_{hom}}(\textbf b)=q^{(m-1)k}(q^{m-1}-q^{m-2})$ or $0$. It follows that
\begin{itemize}
  \item If ${\rm wt_{hom}}(\textbf b)=q^{(m-1)k}(q^{m-1}-q^{m-2})$, then ${\rm wt_{hom}}(\textbf c)=\omega_1=q^{mk-1}-q^{(m-1)k-1}$.
  \item If ${\rm wt_{hom}}(\textbf b)=0$, then ${\rm wt_{hom}}(\textbf c)=\omega_2=q^{mk-1}$. It is easy to get $A_{\omega_2}=q^k-1$ when ${\rm wt_{hom}}(\textbf b)=0$ and ${\rm wt_{hom}}({\bf s})\neq 0$.
\end{itemize}
 Hence $A_{\omega_1}=q^{mk}-A_{\omega_2}-1=q^{mk}-q^k$. By Lemma \ref{lemma-P-optimal}, we have that $C$ is Plotkin-optimal. This completes the proof.
\end{proof}

\begin{cor}
Suppose that $k$ is a positive integer. Let $C$ be a linear code over $R$ with the generator matrix $Y_k$. For any $\textbf c \in C$ and
$\textbf c \neq {\rm {\bf0}}$, then we have
\begin{align*}
\begin{split}
{\rm wt_{hom}}({\bf x}Y_k)=\left\{
\begin{array}{ll}
q^{mk-1}-q^{(m-1)k-1}, & if~{\bf x}\in R^k\backslash\langle\theta^{m-1}\rangle, \\
q^{mk-1}, & if ~{\bf x}\in \langle\theta^{m-1}\rangle\backslash\{{}\bf 0\},\\
0,&if~{\bf x}={\bf 0}.
\end{array}
\right.
\end{split}
\end{align*}
\end{cor}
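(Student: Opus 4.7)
The plan is to derive this weight formula as an immediate consequence of Proposition \ref{y_k} and Corollary \ref{cor-222}, rather than computing weights from scratch. The key observation is that the code $C$ generated by $Y_k$ is free of $q$-dimension $mk$ (since $|C|=q^{mk}$ by Proposition \ref{y_k} and $Y_k$ has $k$ rows), so the map $\mathbf{x}\mapsto \mathbf{x}Y_k$ is an $R$-module isomorphism from $R^k$ onto $C$. In particular, partitioning $R^k$ as $\{\mathbf{0}\}\sqcup(\langle\theta^{m-1}\rangle^k\setminus\{\mathbf{0}\})\sqcup(R^k\setminus\langle\theta^{m-1}\rangle^k)$ corresponds to a partition of $C$.

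First I would handle the trivial case $\mathbf{x}=\mathbf{0}$, which gives weight $0$. Next, Proposition \ref{y_k} tells us that $C$ is a Plotkin-optimal regular projective two-homogeneous weight code of length $n=\frac{q^{mk}-q^{(m-1)k}}{q^m-q^{m-1}}$, so Corollary \ref{cor-222} applies directly: for every $\mathbf{x}\in R^k\setminus\langle\theta^{m-1}\rangle^k$,
\[
\mathrm{wt_{hom}}(\mathbf{x}Y_k)=(q^{m-1}-q^{m-2})n=q^{mk-1}-q^{(m-1)k-1},
\]
which matches the first case of the claim and accounts for exactly $q^{mk}-q^k$ nonzero codewords.

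Finally, for the case $\mathbf{x}\in\langle\theta^{m-1}\rangle^k\setminus\{\mathbf{0}\}$, I would argue by elimination. By the previous step, the $q^{mk}-q^k$ codewords of weight $q^{mk-1}-q^{(m-1)k-1}$ in the weight distribution of Proposition \ref{y_k} are exactly the images of $R^k\setminus\langle\theta^{m-1}\rangle^k$ under $\mathbf{x}\mapsto \mathbf{x}Y_k$. Since $C$ has only two nonzero weights and the remaining $q^k-1$ nonzero codewords all carry weight $q^{mk-1}$ (again by Proposition \ref{y_k}), these remaining codewords are precisely the images of the $q^k-1$ elements of $\langle\theta^{m-1}\rangle^k\setminus\{\mathbf{0}\}$. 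Hence $\mathrm{wt_{hom}}(\mathbf{x}Y_k)=q^{mk-1}$ for every such $\mathbf{x}$, which completes the case analysis.

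No step is genuinely difficult since the substantive computations have already been carried out in Proposition \ref{y_k} and Corollary \ref{cor-222}. The only mild subtlety is ensuring that the bijection $\mathbf{x}\mapsto \mathbf{x}Y_k$ is invoked cleanly, so that the counting matches between the domain partition of $R^k$ and the weight distribution of $C$; once that bookkeeping is set up, the corollary follows immediately.
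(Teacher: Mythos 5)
Your proof is correct, but it runs the elimination in the opposite direction from the paper. The paper's own argument is very short: it reads off from the structure of the proof of Proposition \ref{y_k} (the decomposition ${\bf s}=((\alpha{\bf c})_{\alpha\in R^\times},{\bf b})$, where the ${\bf b}$-component vanishes exactly when ${\bf x}\in\langle\theta^{m-1}\rangle^k$) that ${\rm wt_{hom}}({\bf x}Y_k)=q^{mk-1}$ if and only if ${\bf x}\in\langle\theta^{m-1}\rangle^k\setminus\{{\bf 0}\}$, and then assigns the weight $q^{mk-1}-q^{(m-1)k-1}$ to everything else. You instead pin down the first case directly from Corollary \ref{cor-222} (Plotkin-optimality forces every ${\bf x}\in R^k\setminus\langle\theta^{m-1}\rangle^k$ to the minimum weight $(q^{m-1}-q^{m-2})n$) and then recover the second case by matching cardinalities against the weight distribution of Proposition \ref{y_k}, using that ${\bf x}\mapsto{\bf x}Y_k$ is a bijection $R^k\to C$ since $|C|=q^{mk}=|R^k|$. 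Both routes are sound; yours has the advantage of not reopening the proof of Proposition \ref{y_k} (whose statement alone gives only the weight distribution, not which inputs achieve which weight), at the mild cost of having to invoke regularity and projectivity of the code generated by $Y_k$ (Theorem \ref{Y_k projective}) so that Corollary \ref{cor-222} applies — a dependency the paper itself already uses inside the proof of Proposition \ref{y_k}, so no circularity arises. You also correctly read $R^k\setminus\langle\theta^{m-1}\rangle^k$ where the statement has the typo $R^k\setminus\langle\theta^{m-1}\rangle$.
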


\begin{proof}
If ${\bf x}={\bf 0}$, then the result follows.
By Proposition \ref{y_k}, we have ${\rm wt_{hom}}({\bf x}Y_k)=q^{mk-1}$ if and only if ${\bf x}\in \langle\theta^{m-1}\rangle\backslash\{{\bf 0}\}$. Hence ${\rm wt_{hom}}({\bf x}Y_k)=q^{mk-1}-q^{(m-1)k-1}$ if ${\bf x}\in R^k\backslash\langle\theta^{m-1}\rangle.$
\end{proof}

\begin{example}
Let $R=\Z_{2^3}$. By Proposition \ref{y_k}, there is a free Plotkin-optimal two-homogeneous weight linear $[12;2,0,0]$ code over $\Z_{2^3}$ with the following generator matrix
$$\begin{pmatrix}
    \begin{array}{ccccccccccc}
    1 & 1 & 1 & 1 & 1 & 1 & 1 & 1 & 0246  \\
    0 & 1 & 2 & 3 & 4 & 5 & 6 & 7 & 1111
    \end{array}
    \end{pmatrix}.$$
Moreover, $\omega_1=24$, $\omega_2=32$, $A_{24}=60$ and $A_{32}=3$.
\end{example}

\begin{example}
Let $R=\F_2+u\F_2=\{0,1,u,\overline{u}\}$, where $u^2=0$ and $\overline{u}=u+1$. By Theorem \ref{thm two Lee weight Y_k over Z_2^k}, there is a free Plotkin-optimal two-homogeneous weight linear $[6;2,0]$ code over $R$ with the following generator matrix
$$G=\begin{pmatrix}
    \begin{array}{ccccc}
    1 & 1 & 1 & 1 &  0u\\
    0 & 1 & u & \overline{u} & 11
    \end{array}
    \end{pmatrix}.$$
Moreover, $\omega_1=6$, $\omega_2=8$, $A_{6}=12$ and $A_{8}=3$.
\end{example}

\begin{theorem}\label{thm two Lee weight Y_k over Z_2^k}
Suppose that $k_i$ is a nonnegative integer, where $0\leq i\leq m-1$. Then for any $1\leq t\leq k_0$, there exists a Plotkin-optimal two-homogeneous weight regular projective linear $[n;k_0,k_1,\ldots,k_{m-1}]_q$ code over $R$, where $n=\frac{q^{k-t}(q^t-1)}{q^m-q^{m-1}}$ and $k=mk_0+(m-1)k_1+\cdots+k_{m-1}$. Moreover, the homogeneous weight distribution is as follows.
$$
\begin{array}{ll}
\hline
{\rm Homogeneous\ Weight}\ \omega &  {\rm Multiplicity}  \\ \hline
0 & 1 \\
q^{k-t-1}(q^t-1) & q^k-q^t  \\
q^{k-1} & q^t-1\\ \hline
\end{array}
$$
\end{theorem}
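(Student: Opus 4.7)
The plan is to chain together the two main results already proved in this section: Proposition \ref{y_k}, which supplies a free Plotkin-optimal two-homogeneous weight code of $q$-dimension $mt$, and Corollary \ref{cor-optimal}, which amplifies such a code to any target type while preserving Plotkin-optimality, regularity, projectivity, and the two-weight property. Given the target parameters $(k_0,k_1,\ldots,k_{m-1})$ and $t$ with $1\le t\le k_0$, I would first invoke Proposition \ref{y_k} with parameter $t$ to obtain the code $C_t$ generated by $Y_t$. By that proposition, $C_t$ is a free Plotkin-optimal two-homogeneous weight regular projective $[n_t;t,0,\ldots,0]_q$ code with $n_t=\frac{q^{mt}-q^{(m-1)t}}{q^m-q^{m-1}}$, smaller weight $\omega_1^{(t)}=q^{mt-1}-q^{(m-1)t-1}$ of multiplicity $q^{mt}-q^t$, and larger weight $\omega_2^{(t)}=q^{mt-1}$ of multiplicity $q^t-1$. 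A direct check gives $\omega_1^{(t)}=(q^{m-1}-q^{m-2})n_t$, so $C_t$ satisfies the hypothesis of Corollary \ref{cor-optimal}.

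Next, I would apply Corollary \ref{cor-optimal} to $C_t$ with extension choices $s_0=k_0-t$ and $s_i=k_i$ for $1\le i\le m-1$. Setting $s=ms_0+(m-1)s_1+\cdots+s_{m-1}=k-mt$, the corollary produces a Plotkin-optimal regular projective linear code $C'$ of type $[q^sn_t;k_0,k_1,\ldots,k_{m-1}]_q$. A short calculation verifies $q^s n_t=q^{k-mt}\cdot\frac{q^{(m-1)t}(q^t-1)}{q^m-q^{m-1}}=\frac{q^{k-t}(q^t-1)}{q^m-q^{m-1}}=n$, matching the stated length. The two nonzero weights become $q^s\omega_1^{(t)}=q^{k-t-1}(q^t-1)$ and $q^s\omega_2^{(t)}=q^{k-1}$, reproducing the two weights in the table. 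Corollary \ref{cor-optimal} preserves the multiplicity of $\omega_2^{(t)}$, giving $B_{q^{k-1}}=q^t-1$, while the smaller weight absorbs the new codewords to yield $B_{q^{k-t-1}(q^t-1)}=|C'|-1-(q^t-1)=q^k-q^t$, exactly as claimed.

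I do not expect any substantive obstacle; the proof is a straightforward composition of prior constructions followed by arithmetic bookkeeping to confirm that the amplified length, weights, and frequencies coincide with the parameter list produced by the earlier characterization theorem. The only point requiring attention is that $1\le t\le k_0$ is precisely the range ensuring both that $t$ is a positive integer (so that Proposition \ref{y_k} applies) and that $s_0=k_0-t$ is a nonnegative integer (so that Corollary \ref{cor-optimal} applies). No additional case analysis on the $k_i$ for $i\ge 1$ is needed, since the corollary accepts arbitrary nonnegative extension parameters.
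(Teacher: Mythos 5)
Your proposal is correct and follows exactly the paper's own (one-line) proof: apply Corollary \ref{cor-optimal} with $s_0=k_0-t$ and $s_i=k_i$ for $1\le i\le m-1$ to the code generated by $Y_t$ from Proposition \ref{y_k}. Your additional arithmetic checks on the length, the two weights, and the multiplicities are all accurate and simply make explicit what the paper leaves to the reader.
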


\begin{proof}
The result follows by applying Corollary \ref{cor-optimal} with $s_0=k_0-t$, $s_i=k_i$ ($1\leq i\leq m-1$) on the linear code generated by $Y_t$.
\end{proof}

\begin{example}
Let $R=\Z_{2^3}$ and $t=2$. By Theorem \ref{thm two Lee weight Y_k over Z_2^k}, there is a Plotkin-optimal two-homogeneous weight linear $[192;3,0,1]$ code over $\Z_{2^3}$ with the following generator matrix
$$G=\begin{pmatrix}
    \begin{array}{cccccccc|cccccccc}
    Y_2 & Y_2 & Y_2 & Y_2 & Y_2 & Y_2 & Y_2 & Y_2
    &Y_2 & Y_2 & Y_2 & Y_2 & Y_2 & Y_2 & Y_2 & Y_2 \\
     \textbf 0 & \textbf 1 & \textbf 2 & \textbf 3 & \textbf 4 & \textbf 5 & \textbf 6 & \textbf 7&
     \textbf 0 & \textbf 1 & \textbf 2 & \textbf 3 & \textbf 4 & \textbf 5 & \textbf 6 & \textbf 7\\
     \hline
     \textbf 0 & \textbf 0 & \textbf 0 & \textbf 0 & \textbf 0 & \textbf 0 & \textbf 0 & \textbf 0&
     \textbf 4 & \textbf 4 & \textbf 4 & \textbf 4 & \textbf 4 & \textbf 4 & \textbf 4 & \textbf 4
    \end{array}
    \end{pmatrix},$$
where $\textbf i$ is the row vector $(i,i,\cdots,i)$ of length 12 for any $i\in \Z_{2^3}$ and the matrix $Y_2$ is of the form
$$\begin{pmatrix}
    \begin{array}{ccccccccccc}
    1 & 1 & 1 & 1 & 1 & 1 & 1 & 1 & 0246  \\
    0 & 1 & 2 & 3 & 4 & 5 & 6 & 7 & 1111
    \end{array}
    \end{pmatrix}.$$
Moreover, $\omega_1=384$, $\omega_2=512$, $A_{384}=1020$ and $A_{512}=3$.
\end{example}

\begin{example}
Let $R=\Z_{3^2}$ and $t=2$. By Theorem \ref{thm two Lee weight Y_k over Z_2^k}, there is a Plotkin-optimal two-homogeneous weight linear $[36;2,1]$ code over $\Z_{3^2}$ with the following generator matrix
$$G=\begin{pmatrix}
    \begin{array}{ccccccccc}
    Y_2 & Y_2 & Y_2 \\
     \textbf 0 & \textbf 3 & \textbf 6
    \end{array}
    \end{pmatrix},$$
where $\textbf i$ is the row vector $(i,i,\cdots,i)$ of length 12 for any $i\in \Z_{3^2}$ and the matrix $Y_2$ is of the form
$$\begin{pmatrix}
    \begin{array}{ccccccccccc}
    1 & 1 & 1 & 1 & 1 & 1 & 1 & 1 & 1 & 036  \\
    0 & 1 & 2 & 3 & 4 & 5 & 6 & 7 & 8 & 111
    \end{array}
    \end{pmatrix}.$$
Moreover, $\omega_1=72$, $\omega_2=81$, $A_{72}=234$ and $A_{81}=8$.
\end{example}

\subsection{Regular projective codes and Gray images}

In this subsection, we study that the regularity, projectivity and Gray images of the codes in Theorem \ref{thm two Lee weight Y_k over Z_2^k}.

\begin{theorem}\label{Y_k projective}
Let $C$ be a linear $[n;k_0,k_1,\ldots,k_{m-1}]$ code over $R$ obtained in Theorem \ref{thm two Lee weight Y_k over Z_2^k}. Then $C$ is regular and projective.
\end{theorem}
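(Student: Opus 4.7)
The plan is to reduce the statement to the auxiliary claim that, for every positive integer $k$, the code over $R$ generated by $Y_k$ is regular and projective, and then to prove this auxiliary claim by induction on $k$. By the proof of Theorem \ref{thm two Lee weight Y_k over Z_2^k}, the code $C$ is built from the $Y_t$-code through Corollary \ref{cor-optimal}, which itself amounts to an iterated application of Theorem \ref{thm-1}. Parts (3) and (4) of Theorem \ref{thm-1} already guarantee that each step $C \mapsto C_0$ preserves both regularity and projectivity, so these properties pass from the $Y_t$-code up to $C$ with no further work, and the whole proof reduces to the claim about $Y_k$.

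For the induction on $k$, the base case $k = 1$ is immediate since $Y_1 = [1]$ consists of a single unit column. For the inductive step, I would classify the columns of $Y_k$ into two families dictated by its recursive block form: the \emph{inherited} columns $\binom{{\bf y}}{r}$, where ${\bf y}$ ranges over the columns of $Y_{k-1}$ and $r$ ranges over all of $R$, and the \emph{new} columns $\binom{{\bf b}}{1}$, where ${\bf b}$ ranges over the columns of $B_{k-1}$, i.e., over $\langle \theta \rangle^{k-1}$. Regularity is then immediate from the column-wise criterion noted in the remark after Definition 2.4: inherited columns contain a unit by the inductive regularity of $Y_{k-1}$, while new columns carry the unit $1$ in their last coordinate.

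For projectivity, I would suppose for contradiction that two distinct columns ${\bf v} \ne {\bf w}$ of $Y_k$ generate the same cyclic $R$-submodule. Since both contain units, this forces ${\bf w} = a{\bf v}$ for some unit $a \in R^\times$. A short case analysis then closes the argument. If both columns are inherited, projectivity of $Y_{k-1}$ collapses the upper blocks to a common vector ${\bf y}$; since ${\bf y}$ contains a unit, the equation $a{\bf y} = {\bf y}$ forces $a = 1$, hence $r = r'$ and the columns coincide. If both columns are new, reading off the last coordinate gives $a = 1$ immediately, hence ${\bf b} = {\bf b}'$. The genuinely ring-theoretic step is the mixed case: an inherited column (whose upper block contains a unit) cannot be a unit multiple of a new column (whose upper block lies in $\langle \theta \rangle^{k-1}$), because multiplication by a unit preserves the property of containing a unit, and this property fails in $\langle \theta \rangle^{k-1}$. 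I expect this mixed case to be the only step requiring careful writing; the other two cases are essentially bookkeeping on the recursive structure of $Y_k$.
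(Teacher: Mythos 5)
Your proof is correct and follows essentially the same route as the paper: reduce to the code generated by $Y_k$ via parts (3) and (4) of Theorem \ref{thm-1}, obtain regularity from the unit-in-every-column criterion, and prove projectivity by induction on $k$ with the same three-case split on inherited versus new columns. The only cosmetic difference is that for two inherited columns the paper re-invokes Theorem \ref{thm-1}(4) while you argue directly via the inductive projectivity of $Y_{k-1}$ and the unit in ${\bf y}$, but the content is identical.
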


\begin{proof}
By (3) and (4) of Theorem \ref{thm-1}, it suffices to prove that the code $C_0$ generated by $Y_k$ is regular and projective, i.e., the code in Proposition \ref{y_k} is regular and projective.
Suppose that $Y_k=[{\bf g}_{k,1}~|~{\bf g}_{k,2}~|~\cdots~|~{\bf g}_{k,n_k}]$.

(i) By the inductive construction of $Y_k$, we know that every column of $Y_k$ contains at least one unit element of $R$. Hence $\{{\bf x}\cdot {\bf g}_{k,i}~|~{\bf x}\in R^k\}=R$ for $1\leq i\leq n_k$. It turns out that $C$ is regular.

(ii) We prove the code generated by $Y_{k}$ is projective by induction on $k$. If $k=1$, then the result is clearly true. Assume that the result is valid for $k=t$.
Next, we prove that the result is valid for $k=t+1$. For $k=t+1$, we know that
$$
    Y_{t+1}=\begin{bmatrix}
    \begin{array}{cccccccc}
    Y_{t} & Y_{t} & Y_{t} & \cdots & Y_{t} &B_{t} \\
    \textbf 0 & r_1\textbf 1& r_2\textbf 1 & \cdots & r_{q^{m}-1}\textbf 1 & \textbf 1'
    \end{array}
    \end{bmatrix},
    $$
where $R=\{0,r_1,r_2,\ldots,r_{p^m-1}\}$ and $B_{t}$ is a $t\times q^{(m-1)t}$ matrix over $\langle\theta\rangle$ consisting of all different columns. Note that $\textbf 0$ is the all-zero vector of length $n_t=\frac{q^{mt}-q^{(m-1)t}}{q^m-q^{m-1}}$, $\textbf 1$ is the all-one vector of length $n_t=\frac{q^{mt}-q^{(m-1)t}}{q^m-q^{m-1}}$ and $\textbf 1'$ is the all-one vector of length $q^{(m-1)t}$.
Note that
\begin{center}
${\bf g}_{t+1,an_t+i}=\left[
\begin{array}{c}
  {\bf g}_{t,i} \\
  r_a
\end{array}
\right]$ and ${\bf g}_{t+1,q^mn_t+j}=\left[
\begin{array}{c}
  {\bf b}_j \\
  1
\end{array}
\right],$
\end{center}
where $1\leq i\leq n_t$, $0\leq a\in q^m-1$ and ${\bf b}_j$ is the $j$-th column of $B_t$ for $1\leq j\leq q^{(m-1)t}$. We let
\begin{center}
$\mathcal{H}_1=\{{\bf g}_{t+1,j}~|~1\leq j\leq q^mn_t\}$ and
$\mathcal{H}_2=\{{\bf g}_{t+1,j}~|~q^mn_t+1\leq j\leq n_{t+1}\}$.
\end{center}

It suffices to prove that ${\bf h}_1R\neq {\bf h}_2R$ for any two different columns of $Y_{t+1}$ if we want to prove that the code generated by $Y_{t+1}$ is projective. So we consider the following three cases.

\begin{itemize}
  \item Case 1. Suppose that ${\bf h}_1,{\bf h}_2\in \mathcal{H}_1$.
  By (4) of Theorem \ref{thm-1}, the code generated by the first $q^mn_t$ columns of $Y_{t+1}$ is projective. Hence ${\bf h}_1R\neq {\bf h}_2R$.
  \item Case 2. Suppose that ${\bf h}_1,{\bf h}_2\in \mathcal{H}_2$. Without loss of generality, assume that
       \begin{center}
       ${\bf h}_1={\bf g}_{t+1,q^mn_t+i_1}=\left[
\begin{array}{c}
  {\bf b}_{i_1} \\
  1
\end{array}
\right]$ and ${\bf h}_2={\bf g}_{t+1,q^mn_t+i_2}=\left[
\begin{array}{c}
  {\bf b}_{i_2} \\
  1
\end{array}
\right]$
       \end{center}
       for any pair of distinct elements $i_1,i_2\in \{1,2,\cdots,q^{(m-1)t}\}$.
       Since ${\bf b}_{i_1}\neq {\bf b}_{i_2}$, $\left[
\begin{array}{c}
  {\bf b}_{i_1} \\
  1
\end{array}
\right]\notin \left[
\begin{array}{c}
  {\bf b}_{i_2} \\
  1
\end{array}
\right]R$. It turns out that ${\bf h}_1R\neq {\bf h}_2R$.
  \item Case 3. Suppose that ${\bf h}_1\in \mathcal{H}_1$ and ${\bf h}_2\in \mathcal{H}_2$.
  Since the every column of $Y_t$ contains at least one unit element of $R$, whereas the every column of $B_t$ does not contain unit elements of $R$, so
  ${\bf h}_1R\neq {\bf h}_2R$.
\end{itemize}
Hence the code generated by $Y_{t+1}$ is projective. Through inductive hypothesis, we complete the proof.
\end{proof}

Now, let us consider the Gray image of any Plotkin-optimal two-homogeneous weight regular projective linear code over $R$. First, we recall the parameters of SU1-type two-weight linear codes over $\F_q$.

\begin{prop}{\rm \cite[{\rm Example SU1}]{CK}}
Let $\F_q$ be a finite field. Then the parameters of SU1-type two-weight codes over $\F_q$ are as follows
$$n=\frac{q^\ell-q^s}{q-1}, k=\ell, \omega_1=q^{\ell-1}-q^{s-1},\omega_2=q^{\ell-1},
A_{\omega_1}=q^\ell-q^{\ell-s},A_{\omega_2}=q^{\ell-s}-1,$$
where $\ell>1$ and $1\leq s\leq \ell-1$.
\end{prop}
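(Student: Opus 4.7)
The plan is to realize the SU1 code concretely as a kind of punctured simplex code. Fix any $s$-dimensional $\F_q$-subspace $V_s$ of $V:=\F_q^\ell$, and take for the generator matrix $G$ one column representing each one-dimensional $\F_q$-subspace of $V$ that is \emph{not} contained in $V_s$. Counting projective points immediately gives
$$n=\frac{q^\ell-1}{q-1}-\frac{q^s-1}{q-1}=\frac{q^\ell-q^s}{q-1},$$
and since the chosen representatives span all of $V$ (this uses $s\le \ell-1$), the linear map $x\mapsto xG$ on $\F_q^\ell$ is injective, yielding $k=\ell$ and $|C|=q^\ell$.

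For any nonzero $x\in\F_q^\ell$, the Hamming weight of $xG$ equals $n$ minus the number of chosen columns lying in the hyperplane $H_x:=\{v\in V:v\cdot x=0\}$. I would split the counting according to whether $V_s\subseteq H_x$ or $V_s\not\subseteq H_x$. By standard $\F_q$-duality, $V_s\subseteq H_x$ is equivalent to $x\in V_s^\perp$, so there are exactly $q^{\ell-s}-1$ nonzero $x$ of this type; in the other case, $V_s+H_x=V$ forces $\dim(V_s\cap H_x)=s-1$ via the dimension formula.

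In case (a) where $V_s\subseteq H_x$, the projective points of $H_x$ outside $V_s$ number $\frac{q^{\ell-1}-1}{q-1}-\frac{q^s-1}{q-1}=\frac{q^{\ell-1}-q^s}{q-1}$, so the codeword weight is $n-\frac{q^{\ell-1}-q^s}{q-1}=q^{\ell-1}=\omega_2$, and the count yields $A_{\omega_2}=q^{\ell-s}-1$. In case (b) where $V_s\not\subseteq H_x$, the projective points of $H_x$ outside $V_s$ number $\frac{q^{\ell-1}-1}{q-1}-\frac{q^{s-1}-1}{q-1}=\frac{q^{\ell-1}-q^{s-1}}{q-1}$, giving weight $n-\frac{q^{\ell-1}-q^{s-1}}{q-1}=q^{\ell-1}-q^{s-1}=\omega_1$, and consequently $A_{\omega_1}=(q^\ell-1)-(q^{\ell-s}-1)=q^\ell-q^{\ell-s}$.

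There is essentially no hard step: the only potential obstacle is checking that the chosen column set genuinely spans $V$ (so that the code really has $\F_q$-dimension $\ell$), which holds because $V_s$ is a proper subspace and the points of $PG(\ell-1,q)\setminus PG(s-1,q)$ cannot be confined to any hyperplane of $V$. Everything else reduces to standard dimension counting in $\F_q^\ell$ together with the two-case decomposition above.
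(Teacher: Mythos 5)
Your proof is correct. Note, however, that the paper does not prove this proposition at all: it is quoted verbatim from Calderbank and Kantor's Example SU1, so there is no in-paper argument to compare against. What you have written is the standard geometric derivation behind that example --- take as columns one representative of each point of $PG(\ell-1,q)$ outside a fixed $PG(s-1,q)$, observe that these points cannot lie in a hyperplane (a vector space is never the union of two proper subspaces, which also needs $s\le\ell-1$), and split the nonzero messages $x$ according to whether the hyperplane $H_x$ contains $V_s$ (equivalently $x\in V_s^\perp$, giving $q^{\ell-s}-1$ codewords of weight $q^{\ell-1}$) or meets it in an $(s-1)$-dimensional subspace (giving the remaining $q^\ell-q^{\ell-s}$ codewords of weight $q^{\ell-1}-q^{s-1}$). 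All the counts check out, and the two weight classes are nonempty precisely under the stated constraints $1\le s\le\ell-1$, so your argument is a complete and faithful verification of the cited statement.
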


Let $C_1$ be the code obtained in Theorem \ref{thm two Lee weight Y_k over Z_2^k}.
By the Gray map in Subsection 2.3, we know that $\Phi(C_1)$ is a $q$-ary code with parameters
$$n=\frac{q^k-q^{k-t}}{q-1},~|\Phi(C_1)|=q^k,~\omega_1=q^{k-1}-q^{k-t-1},
~\omega_2=q^{k-1},~A_{\omega_1}=q^k-q^t,~A_{\omega_2}=q^t-1,$$
where $1\leq t\leq k_0$ and $k=mk_0+(m-1)k_1+\cdots+k_{m-1}$.
Hence $\Phi(C)$ has the same parameters and weight distribution as some two-weight
linear codes of type SU1. In this case, $\ell=mk_0+(m-1)k_1+\cdots+k_{m-1}$ and $s=\ell-t$. Note that $\Phi(C)$ is not necessarily linear, so sometimes they are not equivalent. It is worth mentioning that two-weight linear codes of type SU1 are obtained from finite geometry, however, we obtain them by their generator matrices in this paper. Hence our construction method is different.

\section{Applications in secret sharing schemes and graph theory}
As mentioned before, few-weight linear codes have applications in secret sharing
schemes, association schemes and authentication codes  \cite{SS-Ding,ass-Calderbank,ACM-Shamir,SSS-Blakley,authentication,IT-ding-1}. In particular, projective two-weight and three-weight codes are very precious as they are closely related to some combinatorial objects such as SRGs, SWRGs, difference sets and finite geometries \cite{CK,Shi-GC,IT-01,Shi-S-Sole,DCC-shi-sole,DCC-2-SRGs,FFA-2-weight}. Linear codes with few weights were investigated in \cite{IT-zhu,FFA-zhu-liao,AMC-zhu-liao,DM-cao,ccds-Cao-1,Liu-IT,Hyun-4, bocong-1,Hyun-1,Hyun-2,Hyun-3} and many other papers. Here, we study these applications of codes we constructed.

\subsection{Secret sharing schemes}
The concept of secret sharing schemes was first proposed by Blakley \cite{SSS-Blakley} and Shamir \cite{ACM-Shamir} in 1979. A general introduction to secret sharing schemes can be found for instance in Stinson's survey paper \cite{SSS-DCC-survey}.

The sets of participants which are capable of recovering the secret $S$ are called {\em access sets}. The {\em access structure} of a secret sharing scheme is defined to be the set of all access sets. A {\em minimal access set} is a group of participants who can recover the secret with their shares, but any of its proper subgroups can not do so. Furthermore, if a participant is contained in every minimal access set in the scheme, then it is a dictatorial participant.

The access structure of the secret sharing scheme based on a linear code is very complex in general, but it has been shown that for linear codes with all nonzero codewords minimal, their dual codes can be used to construct secret sharing schemes with nice access structures \cite{IT-yuan-ding,SS-Ding}. Now we recall the definition of minimal codewords.

The {\em support} of a vector ${\bf c}=(c_1,c_2,\ldots,c_n)\in \F_q^n$ is
defined by
$${\rm supp}({\bf c})=\{i~|~c_i\neq 0,~ 1\leq i\leq n\}.$$
We say that a vector ${\bf c}_1$ covers a vector ${\bf c}_2$ if ${\rm supp}({\bf c}_2)\subsetneq {\rm supp}({\bf c}_1)$. A {\em minimal codeword} of a linear code $C$ is a nonzero codeword that does not cover any other nonzero codeword of $C$. In general, it is a tough task to determine the minimal codewords of a given linear code. In some special cases, the Ashikhmin-Barg lemma \cite{IT-minimal-code} is very useful in determining the minimal codewords.

\begin{lemma}{\rm (Ashikhmin-Barg Lemma)}
Let $\omega_{min}$ and $\omega_{max}$ be the minimal and maximal nonzero weights of a linear code $C$ over $\F_q$, respectively. If
$$\frac{\omega_{min}}{\omega_{min}}>\frac{q-1}{q},$$
then each nonzero codeword of $C$ is minimal.
\end{lemma}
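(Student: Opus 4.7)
The plan is to argue by contrapositive. Suppose some nonzero codeword $\mathbf{c}_2 \in C$ is covered by another nonzero codeword $\mathbf{c}_1 \in C$, that is, ${\rm supp}(\mathbf{c}_2) \subsetneq {\rm supp}(\mathbf{c}_1)$. I will deduce that $\omega_{min}/\omega_{max} \leq (q-1)/q$, which is the contrapositive of the inequality stated in the lemma.

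The key step is to look at the $q-1$ codewords $\mathbf{c}_1 - a\mathbf{c}_2$ for $a \in \F_q^{\times}$ and bound the sum of their Hamming weights in two different ways. Because the support inclusion is \emph{strict}, any coordinate $i \in {\rm supp}(\mathbf{c}_1)\setminus{\rm supp}(\mathbf{c}_2)$ keeps the nonzero value $c_{1,i}$ in $\mathbf{c}_1 - a\mathbf{c}_2$, so each of these $q-1$ codewords is nonzero. This yields the lower bound
$$\sum_{a \in \F_q^{\times}} {\rm wt}(\mathbf{c}_1 - a\mathbf{c}_2) \geq (q-1)\,\omega_{min}.$$

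For the matching upper bound, I compute the sum coordinatewise. Outside ${\rm supp}(\mathbf{c}_1)$ every coordinate is zero; on ${\rm supp}(\mathbf{c}_1)\setminus{\rm supp}(\mathbf{c}_2)$ every coordinate equals $c_{1,i}\neq 0$; on $i\in{\rm supp}(\mathbf{c}_2)\subseteq{\rm supp}(\mathbf{c}_1)$ both $c_{1,i}$ and $c_{2,i}$ are nonzero, so $c_{1,i}=a c_{2,i}$ has the unique nonzero solution $a = c_{1,i}/c_{2,i}$ in $\F_q^{\times}$. Swapping the order of summation therefore produces
$$\sum_{a \in \F_q^{\times}} {\rm wt}(\mathbf{c}_1 - a\mathbf{c}_2) = (q-1)\,{\rm wt}(\mathbf{c}_1) - {\rm wt}(\mathbf{c}_2).$$

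Combining the two bounds and using ${\rm wt}(\mathbf{c}_1)\leq \omega_{max}$ together with ${\rm wt}(\mathbf{c}_2)\geq \omega_{min}$ gives $(q-1)\omega_{min} \leq (q-1)\omega_{max} - \omega_{min}$, i.e.\ $q\omega_{min} \leq (q-1)\omega_{max}$, as required. The only non-routine point is checking that $\mathbf{c}_1 - a\mathbf{c}_2\neq\mathbf{0}$ for every $a\in\F_q^{\times}$, and this is exactly where strictness of the support inclusion enters; the remainder is bookkeeping on coordinates.
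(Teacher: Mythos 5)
Your argument is correct. Note that the paper does not prove this lemma at all: it is quoted as a known result with a citation to Ashikhmin--Barg, so there is nothing internal to compare against. What you have written is essentially the standard proof of that cited result: the double count of $\sum_{a\in\F_q^{\times}}{\rm wt}(\mathbf{c}_1-a\mathbf{c}_2)$, the coordinatewise case split (zero outside ${\rm supp}(\mathbf{c}_1)$, constant nonzero on ${\rm supp}(\mathbf{c}_1)\setminus{\rm supp}(\mathbf{c}_2)$, exactly one vanishing value of $a$ on ${\rm supp}(\mathbf{c}_2)$), and the observation that strictness of the inclusion forces every $\mathbf{c}_1-a\mathbf{c}_2$ to be nonzero are all exactly the right ingredients, and the resulting inequality $q\,\omega_{min}\leq(q-1)\,\omega_{max}$ is the correct contrapositive. (You have also silently repaired the typo in the statement, which should read $\omega_{min}/\omega_{max}>(q-1)/q$ rather than $\omega_{min}/\omega_{min}$.) No gaps.
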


\begin{theorem}
Let $C$ be a linear $[n;k_0,k_1,\ldots,k_{m-1}]$ code over $R$ obtained in Theorem \ref{thm two Lee weight Y_k over Z_2^k}. If $t\geq 2$ and $\Phi(C)$ is linear, then all nonzero codewords of $\Phi(C)$ are minimal.
\end{theorem}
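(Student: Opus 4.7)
The plan is to invoke the Ashikhmin-Barg Lemma directly. Since $\Phi$ is a distance-preserving (indeed weight-preserving) map from $(R^n, {\rm d_{hom}})$ to $(\F_q^{q^{m-1}n}, {\rm d_H})$, the Hamming weight distribution of $\Phi(C)$ coincides with the homogeneous weight distribution of $C$ given in Theorem \ref{thm two Lee weight Y_k over Z_2^k}. Hence $\Phi(C)$ is a (linear, by assumption) two-weight code over $\F_q$ with nonzero Hamming weights
$$\omega_{\min} = q^{k-t-1}(q^t-1), \qquad \omega_{\max} = q^{k-1},$$
where $k = mk_0 + (m-1)k_1 + \cdots + k_{m-1}$.

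Next I would compute the ratio and check the Ashikhmin-Barg hypothesis. A direct calculation gives
$$\frac{\omega_{\min}}{\omega_{\max}} = \frac{q^{k-t-1}(q^t-1)}{q^{k-1}} = \frac{q^t-1}{q^t} = 1 - q^{-t}.$$
The threshold condition $\omega_{\min}/\omega_{\max} > (q-1)/q = 1 - q^{-1}$ is therefore equivalent to $q^{-t} < q^{-1}$, i.e., $t > 1$. Under the standing hypothesis $t \geq 2$, this inequality holds strictly, so the Ashikhmin-Barg Lemma applies and every nonzero codeword of $\Phi(C)$ is minimal.

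There is no real obstacle here; the only subtle point is making sure that the hypothesis ``$\Phi(C)$ is linear'' is genuinely needed (the Ashikhmin-Barg Lemma is stated for linear codes over $\F_q$), and that the weights of $\Phi(C)$ indeed match those of $C$ (which follows from the distance-preservation of $\Phi$ recalled in Subsection 2.3). Thus the proof reduces to the two-line verification above, with the role of $t\geq 2$ being precisely to push the ratio $1 - q^{-t}$ past the Ashikhmin-Barg threshold $1 - q^{-1}$.
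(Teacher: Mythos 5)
Your proposal is correct and matches the paper's own proof: both apply the Ashikhmin--Barg Lemma after noting that the weight-preserving Gray map transfers the homogeneous weight distribution of $C$ to the Hamming weight distribution of $\Phi(C)$, and both verify $\omega_{\min}/\omega_{\max}=1-q^{-t}>1-q^{-1}$ exactly when $t\geq 2$. The only difference is that you spell out the role of the Gray map and of the linearity hypothesis more explicitly than the paper does.
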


\begin{proof}
By Theorem \ref{thm two Lee weight Y_k over Z_2^k}, we have
$$\frac{\omega_{min}}{\omega_{min}}=\frac{q^{k-t-1}(q^t-1)}{q^{k-1}}=1-\frac{1}{q^t}
>\frac{q-1}{q}~{\rm for}~t\geq 2.$$
Hence the result follows.
\end{proof}

\subsection{Strongly regular graphs}

In this subsection, we will study that the relation between two-homogeneous codes we
construct and SRGs. First, we recall the definition of SRGs.

\begin{definition}
A simple graph on $N$ vertices is called $strongly\ regular\ graph$ with parameters
$(N,K, \lambda,\mu)$ if\\
(1) each vertex is adjacent to $K$ vertices,\\
(2) for each pair of adjacent vertices there are $\lambda$ vertices adjacent to both,\\
(3) for each pair of non-adjacent vertices there are $\mu$ vertices adjacent to both.
\end{definition}

The $spectrum$ of a graph $\Gamma$ is the set of eigenvalues of its adjacency matrix.
We will call an eigenvalue of $\Gamma$ $restricted$ if it has an eigenvector orthogonal to the all-ones vector $\textbf 1$. Note that for a $K$-regular connected graph, the restricted eigenvalues are simply the eigenvalues different from $K$ \cite{book-SRGs}.

 Byrne, Greferath and Honold \cite{DCC-2-SRGs} showed that the coset graphs of regular projective two-normalized homogeneous weight codes over finite chain rings are SRGs. Note that we call the unique homogeneous weight of average value $\gamma=1$ the {\em normalized}
homogeneous weight.

\begin{theorem}{\rm \cite[Theorem 5.5]{DCC-2-SRGs}}\label{DCC-srg}
Let $C$ be a regular projective two-weight code over a finite chain ring $R$ with normalized homogeneous weights $\omega_1$ and $\omega_2$. Then the graph $\Gamma:=(C,E)$ with vertex set $C$ and edge set $E:=\{\{x,y\}~|~{\bf x,y}\in C~ with~{\rm wt_{hom}}({\bf x} -{\bf y})=\omega_1 \}$ is strongly regular. Its parameters are $(N,K,\lambda,\mu)$, where
\begin{align*}
  N & =|C|, \\
  K & =\frac{(n-\omega_2)|C|+\omega_2}{\omega_1-\omega_2},\\
  \lambda&=\frac{nK\left[1-\left(1-\frac{\omega_1}{n}\right)^2\right]+\omega_2(1-K)}
  {\omega_1-\omega_2},\\
  \mu&=\frac{nK\left[1-\left(1-\frac{\omega_1}{n}\right) \left(1-\frac{\omega_2}{n}\right)\right]-\omega_2K}
  {\omega_1-\omega_2}.
\end{align*}
\end{theorem}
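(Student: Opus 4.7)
The plan is to derive the strongly regular property from the two matrix identities for the distance matrix $D$ established in Theorem \ref{thm-2-equ}. First I would rescale from the non-normalized weight used there to the normalized one used in the present statement: dividing through by $(q-1)q^{m-2}$ turns those identities into the cleaner forms $DJ = n|C|J$ and $D^2 + \tfrac{|C|}{|R^\times|}D = n|C|\bigl(\tfrac{1}{|R^\times|} + n\bigr)J$. Since $C$ has only two nonzero weights, every entry of $D$ is $0$, $\omega_1$ or $\omega_2$, so $D = \omega_1 A + \omega_2(J - I - A)$ where $A$ is the adjacency matrix of $\Gamma$; solving yields
\[
A \;=\; \frac{D - \omega_2(J-I)}{\omega_1 - \omega_2},
\]
which exhibits $A$ as an explicit polynomial in $D$, $I$, and $J$.

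From the first identity one immediately obtains
\[
A\mathbf{1} \;=\; \frac{n|C| - \omega_2(|C|-1)}{\omega_1 - \omega_2}\,\mathbf{1} \;=\; \frac{(n-\omega_2)|C| + \omega_2}{\omega_1 - \omega_2}\,\mathbf{1},
\]
proving that $\Gamma$ is $K$-regular with $K$ as claimed. Next I would square the expression for $A$: expanding $(D - \omega_2(J-I))^2$, using $(J-I)^2 = (|C|-2)J + I$ and $D(J-I) = DJ - D = n|C|J - D$, and then substituting the quadratic identity for $D^2$, one gets an expression of the form $A^2 = \alpha I + \beta A + \gamma J$ with $\alpha,\beta,\gamma$ explicit rational functions of $n,\omega_1,\omega_2,|C|,|R^\times|$. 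Matching this against the identity $A^2 = (K-\mu)I + (\lambda-\mu)A + \mu J$ that characterizes strongly regular graphs then yields both the strong regularity and the parameters $\lambda$ and $\mu$ in closed form. The rewriting into the compact shape $nK\bigl[1 - (1-\omega_1/n)^2\bigr] + \omega_2(1-K)$ for $\lambda$, and likewise for $\mu$, is a purely algebraic simplification.

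The main obstacle I anticipate is bookkeeping in the quadratic step: all of $n$, $\omega_1$, $\omega_2$, $|C|$, and $|R^\times|$ enter simultaneously, and one must carefully collect the coefficients of $I$, $A$, and $J$ after the substitutions so that the results cleanly reorganize into the stated compact forms. A more conceptual alternative is spectral: the quadratic identity for $D$ shows that $A$ has at most two restricted eigenvalues on $\mathbf{1}^\perp$, which by a standard theorem (see, e.g., \cite{book-SRGs}) forces $\Gamma$ to be strongly regular, after which $\lambda$ and $\mu$ are recovered from $K$ and the restricted eigenvalues through the usual SRG identities. Either route relies on no combinatorial input beyond Theorem \ref{thm-2-equ}.
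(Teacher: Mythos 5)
The paper does not prove Theorem \ref{DCC-srg} at all: it is imported verbatim as a cited result from \cite[Theorem 5.5]{DCC-2-SRGs}, so there is no in-paper argument to compare against. Your reconstruction from Theorem \ref{thm-2-equ} is correct and is essentially the standard Delsarte-style proof: writing $A=\frac{D-\omega_2(J-I)}{\omega_1-\omega_2}$ is legitimate because every linear code over $R$ is proper (so $D_{uv}=0$ iff $u=v$) and the code has exactly two nonzero weights, and the two rescaled identities then give $A\mathbf{1}=K\mathbf{1}$ and $A^2=\alpha I+\beta A+\gamma J$, which encodes strong regularity directly. One small point you gloss over: the coefficient of $J$ that falls out of the expansion is $\gamma=\frac{|C|(n-\omega_2)\left(\frac{1}{|R^\times|}+n-\omega_2\right)}{(\omega_1-\omega_2)^2}$, which is not formally identical to the stated $\mu=\frac{K\omega_1(n-\omega_2)}{n(\omega_1-\omega_2)}$; their equality is not a purely formal simplification but requires the quadratic relation among $\omega_1,\omega_2,n,|C|,|R^\times|$ of Proposition \ref{prop-equ} (itself a consequence of the same matrix identities, e.g.\ by summing all entries), so you should invoke it explicitly when matching coefficients. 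With that addition the argument is complete.
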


If we consider the normalized homogeneous weight, then the code in Theorem \ref{thm two Lee weight Y_k over Z_2^k} is a two-normalized homogeneous weight regular projective linear $[n;k_0,k_1,\ldots,k_{m-1}]_q$ code over $R$, where $n=\frac{q^k-q^{k-t}}{q^m-q^{m-1}}$ and $k=mk_0+(m-1)k_1+\cdots+k_{m-1}$. Moreover, the normalized homogeneous weight distribution is as follows.
$$
\begin{array}{ll}
\hline
{\rm Homogeneous\ weight}\ \omega &  {\rm Multiplicity}  \\ \hline
0 & 1 \\
\frac{q^{k-1}-q^{k-t-1}}{q^{m-1}-q^{m-2}} & q^k-q^t  \\
\frac{q^{k-1}}{q^{m-1}-q^{m-2}} & q^t-1\\ \hline
\end{array}
$$
Hence we have the following theorem.

\begin{theorem}
Let $C$ be the linear code above. Then the graph $\Gamma:=(C,E)$ with vertex set $C$ and edge set $E:=\{\{x,y\}~|~{\bf x,y}\in C~ with~{\rm wt_{hom}}({\bf x} -{\bf y})=\omega_1 \}$ is an SRG with parameters $(q^k,q^k-q^t,q^{k}-2q^t,q^k-q^t)$, which has unrestricted eigenvalues $0$ and $-q^t$.
\end{theorem}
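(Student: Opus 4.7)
The plan is to apply Theorem~\ref{DCC-srg} to the normalized two-homogeneous weight regular projective linear code whose weight distribution is displayed in the table immediately preceding the statement. The single observation that collapses every computation is that the normalized form of the Plotkin-optimality condition of Lemma~\ref{lemma-P-optimal} reads $\omega_1=n$: dividing ${\rm d_{hom}}(C)=(q^{m-1}-q^{m-2})n$ by the normalization constant $(q^{m-1}-q^{m-2})$ turns the minimum weight into exactly $n$. Consequently every factor $(1-\omega_1/n)$ appearing in the formulas of Theorem~\ref{DCC-srg} vanishes, which simplifies $\lambda$ and $\mu$ dramatically.

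First I would compute two auxiliary quantities directly from the tabulated values $n=\tfrac{q^{k-t}(q^t-1)}{q^{m-1}(q-1)}$, $\omega_1=\tfrac{q^{k-t-1}(q^t-1)}{q^{m-2}(q-1)}$, $\omega_2=\tfrac{q^{k-1}}{q^{m-2}(q-1)}$, namely $\omega_1-\omega_2=n-\omega_2=-\tfrac{q^{k-t}}{q^{m-1}(q-1)}$ and $\omega_2/(n-\omega_2)=-q^t$. Plugging these into the formula of Theorem~\ref{DCC-srg} yields $K=\bigl((n-\omega_2)q^k+\omega_2\bigr)/(\omega_1-\omega_2)=q^k-q^t$ after one line of algebra. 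With $(1-\omega_1/n)^2=0$, the expression for $\lambda$ collapses to $\bigl(nK+\omega_2(1-K)\bigr)/(\omega_1-\omega_2)=K+\omega_2/(n-\omega_2)=q^k-2q^t$, and similarly $\mu=(nK-\omega_2 K)/(\omega_1-\omega_2)=K=q^k-q^t$. This yields the stated SRG parameters $(q^k,q^k-q^t,q^k-2q^t,q^k-q^t)$.

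Finally, for the eigenvalues I would invoke the standard identity that the restricted eigenvalues of an SRG with parameters $(N,K,\lambda,\mu)$ are the two roots of $x^2-(\lambda-\mu)x-(K-\mu)=0$. Substituting $\lambda-\mu=-q^t$ and $K-\mu=0$ reduces this to $x(x+q^t)=0$, yielding the eigenvalues $0$ and $-q^t$. I do not expect any genuine obstacle: the proof is essentially a mechanical manipulation of Theorem~\ref{DCC-srg} once one notices the equality $\omega_1=n$. The only matter requiring attention is bookkeeping of signs, since both $\omega_1-\omega_2$ and $n-\omega_2$ are negative, so the cancellations producing $K$, $\lambda$ and $\mu$ must be carried out attentively.
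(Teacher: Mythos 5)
Your proposal is correct and follows essentially the same route as the paper: both apply Theorem~\ref{DCC-srg} to the normalized weight distribution to read off $(N,K,\lambda,\mu)$ and then obtain the eigenvalues as roots of $x^2+(\mu-\lambda)x+(\mu-K)=0$. The paper merely declares the parameter computation ``straightforward,'' whereas you carry it out explicitly; your key simplification $\omega_1=n$ (so that $\omega_1-\omega_2=n-\omega_2$ and the factors $1-\omega_1/n$ vanish) is exactly what makes the paper's claim go through, and all your intermediate values check out.
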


\begin{proof}
The parameters of the graph $\Gamma$ is straightforward by Theorem \ref{DCC-srg}. By \cite[Section 1.1.1]{book-SRGs}, the restricted eigenvalues of $\Gamma$ are found as the roots of $x^2+(\mu-\lambda)x+(\mu-K)=0$. Hence the result follows.
\end{proof}

\begin{remark}
According to \cite[Section 1.1.3]{book-SRGs}, the graph $\Gamma$ is also a complete multipartite graph of type $K_{a\times m}$, where $a=q^t$ and $m=q^{k-t}$.
\end{remark}

Shi et al. \cite{Shi-S-Sole,IT-01,Shi-GC} showed that the coset graphs of the dual codes of regular projective two-homogeneous weight codes over finite chain rings are SRGs.
We recall some basic knowledge.
Let $[{\bf h}_1~|~{\bf h}_2~|~\cdots~|~{\bf h}_n]$ be a parity-check matrix of $C$. The vertex set of {\em the syndrome graph} $\Gamma(C)$ is $V=\{H{\bf x}~|~{\bf x}\in R^n\}$, the column space of $H$, which is isomorphic to the dual code $C^\perp$. Two vertexes $H{\bf x}$, $H{\bf y}$ are adjacent in $\Gamma(C)$ if they differ by a unit-multiple of a column of $H$:
$$H{\bf x}\sim H{\bf y}\Longleftrightarrow H({\bf x}-{\bf y})=u{\bf h}_i,$$
for some $1\leq i\leq n$ and $u\in R^\times.$ It is obvious that $\Gamma(C)$
is the Cayley graph of $V$ corresponding to the generating set $S=\{u{\bf h}_i~|~u\in R^\times,~1\leq i\leq n\}$. Since ${\bf 0}\notin S$ and $-S\subseteq S$, $\Gamma(C)$ is simple. Moreover, $\Gamma(C)$ is regular of degree $|S|=(q^m-q^{m-1})n$ and has $|C^\perp|$ vertices.

\begin{theorem}{\rm \cite{Shi-S-Sole,IT-01,Shi-GC}}\label{thm-srg-shi}
Suppose that $C$ is a regular, projective linear two-weight code over a finite chain ring $R$ with nonzero homogeneous weights $\omega_1$ and $\omega_2$. Then the syndrome graph $\Gamma(C^\perp)$ is an SRG with $(q^m-q^{m-1})n$ and the restricted eigenvalues $K(\omega_1)$ and $K(\omega_2)$, where $K(x)=(q^m-q^{m-1})n-qx$.
\end{theorem}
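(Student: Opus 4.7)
The plan is to recognize $\Gamma(C^\perp)$ as a Cayley graph on a finite abelian group and to extract its spectrum through character theory, then invoke the classical spectral characterization of strongly regular graphs.

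First I would identify the vertex set explicitly. Writing $G=[{\bf g}_1\,|\cdots|\,{\bf g}_n]$ for a generator matrix of $C$ (hence a parity-check matrix of $C^\perp$), the map ${\bf x}\mapsto G{\bf x}$ identifies $V=\{G{\bf x}:{\bf x}\in R^n\}$ with the quotient $R^n/C^\perp$, and from $G({\bf x}-{\bf y})=u{\bf g}_i=G(u{\bf e}_i)$ one sees that $G{\bf x}\sim G{\bf y}$ iff ${\bf x}-{\bf y}-u{\bf e}_i\in C^\perp$ for some $u\in R^\times$ and some $i$. Thus $\Gamma(C^\perp)$ is the Cayley graph on $V$ with connection set $S=\{u{\bf e}_i+C^\perp:u\in R^\times,\,1\le i\le n\}$. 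Regularity of $C$ forces ${\bf g}_i\neq 0$ (so $u{\bf e}_i\notin C^\perp$), and projectivity plus the structure of $R^\times$ force the elements of $S$ to be pairwise distinct, giving $|S|=(q^m-q^{m-1})n$, the asserted degree.

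Next I would compute the eigenvalues. Because $R$ is a chain ring, hence Frobenius, the characters of $V=R^n/C^\perp$ are exactly $\chi_{\bf c}({\bf x}+C^\perp)=\chi({\bf c}\cdot{\bf x})$ with ${\bf c}\in(C^\perp)^\perp=C$, where $\chi$ is a fixed generating additive character of $R$. The Cayley-graph eigenvalue at $\chi_{\bf c}$ is
\[\lambda_{\bf c}=\sum_{s\in S}\chi_{\bf c}(s)=\sum_{i=1}^n\sum_{u\in R^\times}\chi(u\,c_i).\]
The key ingredient is the classical identity
\[\sum_{u\in R^\times}\chi(ux)=|R^\times|-q\cdot{\rm wt_{hom}}(x),\quad x\in R,\]
which one verifies case by case on the ideal chain: the inner sum equals $|R^\times|$, $-q^{m-1}$, or $0$ according as $x=0$, $x\in\langle\theta^{m-1}\rangle\setminus\{0\}$, or $x\notin\langle\theta^{m-1}\rangle$, matching the three values of ${\rm wt_{hom}}$. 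Summing over $i$ gives $\lambda_{\bf c}=n|R^\times|-q\cdot{\rm wt_{hom}}({\bf c})=K({\rm wt_{hom}}({\bf c}))$.

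As ${\bf c}$ ranges over $C$, the weight takes the values $0,\,\omega_1,\,\omega_2$, yielding the three eigenvalues $(q^m-q^{m-1})n=K(0)$, $K(\omega_1)$, and $K(\omega_2)$. Connectedness of $\Gamma(C^\perp)$ is immediate since the cosets ${\bf e}_i+C^\perp$ already generate $R^n/C^\perp$, so $(q^m-q^{m-1})n$ is a simple eigenvalue. A connected regular graph with exactly three distinct eigenvalues is strongly regular, with its two restricted eigenvalues equal to $K(\omega_1)$ and $K(\omega_2)$, as claimed. The only delicate point is the character-sum computation together with the appeal to the Frobenius property to identify $\widehat V$ with $C$; both steps go through cleanly because finite chain rings are Frobenius.
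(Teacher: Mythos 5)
Your proof is correct, and since the paper itself only imports this theorem by citation to \cite{Shi-S-Sole,IT-01,Shi-GC} without reproving it, the relevant comparison is with those sources: your argument (Cayley graph on the syndrome group $R^n/C^\perp\cong\widehat{C}$, eigenvalues via the character-sum identity $\sum_{u\in R^\times}\chi(ux)=|R^\times|-q\,{\rm wt_{hom}}(x)$, then the three-eigenvalue criterion for strong regularity) is essentially the same route taken there. The one point worth making fully explicit is that regularity and projectivity are exactly what guarantee $|S|=(q^m-q^{m-1})n$ with no collapsing of the cosets $u{\bf e}_i+C^\perp$, which you do note.
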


By Theorems \ref{thm two Lee weight Y_k over Z_2^k} and \ref{thm-srg-shi}, we have the following theorem.

\begin{theorem}
Let $C$ be a linear $[n;k_0,k_1,\ldots,k_{m-1}]$ code over $R$ obtained in Theorem \ref{thm two Lee weight Y_k over Z_2^k}. Then the syndrome graph $\Gamma(C^\perp)$ of $C^\perp$ is an SRG with parameters $(q^k,q^k-q^{k-t},q^{k}-2q^{k-t},q^k-q^{k-t})$, which has unrestricted eigenvalues $0$ and $-q^{k-t}$.
\end{theorem}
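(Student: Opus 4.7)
The plan is to apply Theorem \ref{thm-srg-shi} directly to the code $C$ produced by Theorem \ref{thm two Lee weight Y_k over Z_2^k}, and then translate the resulting spectral data into the SRG parameters using the standard quadratic relation between a strongly regular graph's restricted eigenvalues and its parameters $(N,K,\lambda,\mu)$. Since $C$ is regular, projective and two-homogeneous weight (Theorems \ref{thm two Lee weight Y_k over Z_2^k} and \ref{Y_k projective}), the hypotheses of Theorem \ref{thm-srg-shi} are satisfied, and the only work is arithmetic.

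First, I would read off the relevant quantities from Theorem \ref{thm two Lee weight Y_k over Z_2^k}:
\[
 n=\frac{q^{k-t}(q^t-1)}{q^m-q^{m-1}},\qquad |C|=q^k,\qquad \omega_1=q^{k-t-1}(q^t-1),\qquad \omega_2=q^{k-1}.
\]
The vertex set of $\Gamma(C^\perp)$ has cardinality $|(C^\perp)^\perp|=|C|=q^k$, so $N=q^k$. The graph is regular of degree $(q^m-q^{m-1})n$, which telescopes to $q^{k-t}(q^t-1)=q^k-q^{k-t}$, giving $K=q^k-q^{k-t}$.

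Next, I would compute the two restricted eigenvalues supplied by Theorem \ref{thm-srg-shi} via $K(x)=(q^m-q^{m-1})n-qx$:
\[
 K(\omega_1)=(q^k-q^{k-t})-q\cdot q^{k-t-1}(q^t-1)=0,
\]
\[
 K(\omega_2)=(q^k-q^{k-t})-q\cdot q^{k-1}=-q^{k-t},
\]
which are precisely the restricted eigenvalues claimed.

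Finally, I would recover $\lambda$ and $\mu$ from the standard identity that the restricted eigenvalues $r,s$ of an SRG satisfy $x^2+(\mu-\lambda)x+(\mu-K)=0$. With $r=0$ and $s=-q^{k-t}$, this yields $\mu-K=r\cdot s=0$, hence $\mu=K=q^k-q^{k-t}$, and $\lambda-\mu=r+s=-q^{k-t}$, hence $\lambda=q^k-2q^{k-t}$. Assembling these gives the parameters $(q^k,\,q^k-q^{k-t},\,q^k-2q^{k-t},\,q^k-q^{k-t})$, completing the proof. No step is really an obstacle here: the entire argument is a transparent specialization of the general spectral formula of Theorem \ref{thm-srg-shi}, with the only care needed being the bookkeeping of $q$-powers when simplifying $(q^m-q^{m-1})n$ and $q\omega_i$.
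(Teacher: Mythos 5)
Your proposal is correct and follows essentially the same route as the paper: identify $N=|C|=q^k$ and $K=(q^m-q^{m-1})n=q^k-q^{k-t}$, invoke Theorem \ref{thm-srg-shi} to get the eigenvalues $K(\omega_1)=0$ and $K(\omega_2)=-q^{k-t}$, and recover $\lambda$ and $\mu$ from the quadratic $x^2+(\mu-\lambda)x+(\mu-K)=0$. Your write-up is just a more explicit version of the paper's argument, with the $q$-power bookkeeping spelled out.
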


\begin{proof}
By the above discussion,
$N=|(C^\perp)^\perp|=|C|=q^k$ and $K=|S|=(q^m-q^{m-1})n=q^k-q^{k-t}$. By Theorem \ref{thm-srg-shi}, the syndrome graph $\Gamma(C^\perp)$ of $C^\perp$ is an SRG with unrestricted eigenvalues $0$ and $-q^{k-t}$. By \cite[Section 1.1.1]{book-SRGs}, the restricted eigenvalues of $\Gamma$ are found as the roots of $x^2+(\mu-\lambda)x+(\mu-K)=0$. Hence the result follows.
\end{proof}

\subsection{Strongly walk-regular graphs}

In this subsection, we apply Theorem \ref{thm-1} to some known three-homogeneous weight regular projective codes over finite chain rings. Several families of three-homogeneous weight regular projective codes are constructed. As an application, we construct several families of Strongly walk-regular graphs (SWRGs). SWRGs were introduced in \cite{JCTA-SWRG} as a natural generalization of SRGs, where paths of length $2$ are replaced by paths of length $\ell\geq 2$. Specifically, a graph is $\ell$-SWRG if there are three integers $(\lambda,\mu,\nu)$ such that the number of paths of length $\ell$ between any two vertices $x$ and $y$ is
\begin{itemize}
  \item $\lambda$ if $x$ and $y$ are connected;
  \item $\mu$ if $x$ and $y$ are disconnected;
  \item $\nu$ if $x=y$.
\end{itemize}

A set $\Omega\subseteq R^k$ is called an {\em $\ell$-sum set} if it is stable by scalar multiplication by units and there are constants $\sigma_0$ and $\sigma_1$ such that each non-zero $h\in R^k$ can be written as $h=\sum_{i=1}^\ell x_i$ with $x_i\in \Omega$ exactly $\sigma_0$ times if $h\in \Omega$ and $\sigma_1$ times if $h\in R^k\setminus \Omega$. Note that $\ell$-sum sets are a natural generalization of partial difference
sets and triple sum sets (TSSs).
Let $\overline{R^k}$ denote the set of all regular vectors in $R^k$, $i.e.$,
$$\overline{R^k}=\{{\bf x}\in R^k~|~\{{\bf x}\cdot {\bf y}~|~{\bf y}\in R^k\}=R\}.$$
Assume that $\Omega\subseteq \overline{R^k}$ such that ${\bf 0}\notin \Omega$ and it is stable by scalar multiplication by units. Let $H$ be a $k\times n$ matrix whose columns are all projectively nonequivalent elements of $\Omega$. Then we denote by $C(\Omega)$ the regular and projective code of length $\frac{|\Omega|}{|R^\times|}$ with the parity-check matrix $H$. In \cite[Theorem 2]{Shi-GC}, it was shown that $\Omega$ is an $\ell$-sum set if and only if $\Gamma(C(\Omega))$ is an $\ell$-SWRG. This paper only focuses on the case of $\ell=3$.

\begin{theorem}{\rm \cite[Theorem 12]{Shi-GC}}\label{GC-shi}
Assume that $C(\Omega)^\perp$ is of length $n$ and has three nonzero homogeneous weights $\omega_1<\omega_2<\omega_3$. Then $\Omega$ is a TSS if and only if $\omega_1+\omega_2+\omega_3=3n(q^{m-1}-q^{m-2})$.
\end{theorem}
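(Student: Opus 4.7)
The plan is to translate the triple-sum-set condition on $\Omega$ into a spectral identity on the syndrome graph $\Gamma(C(\Omega))$, and then rewrite that identity in terms of the homogeneous weights of $C(\Omega)^\perp$.

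The first step invokes the result \cite[Theorem 2]{Shi-GC} already quoted above: $\Omega$ is a TSS (i.e.\ a $3$-sum set) if and only if the Cayley graph $\Gamma(C(\Omega))$ is a $3$-SWRG. Thus it suffices to characterize when $\Gamma(C(\Omega))$ is a $3$-SWRG in the present spectral setting. The graph has degree $K=(q^m-q^{m-1})n$, and I would extend the argument behind Theorem~\ref{thm-srg-shi} from two weights to three to conclude that its spectrum consists of $K$ together with the three restricted eigenvalues
\[
\theta_i=K(\omega_i)=(q^m-q^{m-1})n-q\omega_i,\qquad i=1,2,3,
\]
which are pairwise distinct and distinct from $K$ because $0<\omega_1<\omega_2<\omega_3$.

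Next I would characterize $3$-SWRGs among $K$-regular graphs with exactly these four distinct eigenvalues. By definition $\Gamma(C(\Omega))$ is a $3$-SWRG iff there are constants $\alpha,\beta,\gamma$ with
\[
A^3=\alpha A+\beta(J-I-A)+\gamma I.
\]
Evaluating on the all-ones eigenvector (with $A\mathbf{1}=K\mathbf{1}$ and $J\mathbf{1}=N\mathbf{1}$) and on any eigenvector orthogonal to $\mathbf{1}$ with eigenvalue $\theta_i$ shows that this identity is equivalent to $\theta_1,\theta_2,\theta_3$ being the three roots of a depressed cubic $x^3-(\alpha-\beta)x-(\gamma-\beta)=0$. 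Since the coefficient of $x^2$ in this cubic vanishes, by Vieta's formulas this is in turn equivalent to $\theta_1+\theta_2+\theta_3=0$. Conversely, if the $\theta_i$ sum to zero, one first reads off $\alpha-\beta$ and $\gamma-\beta$ from the elementary symmetric functions of the $\theta_i$ and then uses the $\mathbf{1}$-equation to pin down $\beta$; the resulting $(\alpha,\beta,\gamma)$ satisfy the matrix identity on each eigenspace, hence on the whole space by orthogonal decomposition.

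Finally, a direct computation yields
\[
K(\omega_1)+K(\omega_2)+K(\omega_3)=3n(q^m-q^{m-1})-q(\omega_1+\omega_2+\omega_3),
\]
so the zero-sum condition on the eigenvalues is equivalent to $\omega_1+\omega_2+\omega_3=3n(q^{m-1}-q^{m-2})$, as claimed. The principal obstacle is the spectral characterization of $3$-SWRGs: the reverse direction requires producing the triple $(\alpha,\beta,\gamma)$ from the weight identity and verifying that the scalar polynomial identity descends from each eigenspace to the matrix level, which is precisely where the Cayley (vertex-transitive, hence walk-regular) structure of $\Gamma(C(\Omega))$ is essential.
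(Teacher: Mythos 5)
The paper itself states this result without proof, quoting it from \cite{Shi-GC}, and your argument is essentially the proof given there: reduce via \cite[Theorem 2]{Shi-GC} to the $3$-SWRG property of the syndrome graph $\Gamma(C(\Omega))$, apply the van Dam--Omidi spectral characterization (a connected regular graph with four distinct eigenvalues is a $3$-SWRG iff its restricted eigenvalues are the roots of a depressed cubic, i.e.\ sum to zero by Vieta), and translate eigenvalues into weights via $\theta_i=(q^m-q^{m-1})n-q\omega_i$. The one small correction is your closing remark: passing from the scalar identity on each eigenspace to the matrix identity $A^3=\alpha A+\beta(J-I-A)+\gamma I$ requires only that $A$ and $J$ commute and are simultaneously diagonalizable, which follows from regularity and connectivity of the graph, not from the full Cayley (vertex-transitive) structure.
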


Let $C$ be an $N$-homogeneous weight regular projective linear code over $R$ so that $C$ contains at least a codeword with homogeneous weight $n(q^{m-1}-q^{m-2})$. Applying Theorem \ref{thm-1} to the code $C$, one can construct a family of $N$-homogeneous weight regular projective linear codes over $R$. Then we have the following theorem.

\begin{theorem}\label{thm-3-TSS}
Let $\Omega$ be a $3$-sum set such that $C(\Omega)^\perp$ is a regular projective linear code of length $n$ over $R$ with three nonzero homogeneous weights $\omega_1<\omega_2=n(q^{m-1}-q^{m-2})<\omega_3$ and $\omega_1+\omega_3=2n(q^{m-1}-q^{m-2})$. Let $\Omega'=\{({\bf a},b)~|~{\bf a}\in \Omega,~b\in \langle \theta^j\rangle\}$. Then $\Omega'$ is also a $3$-sum set, i.e., $\Gamma(C(\Omega'))$ is a $3$-SWRG.
\end{theorem}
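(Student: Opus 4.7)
The plan is to identify $C(\Omega')^\perp$ with a code constructed via Theorem~\ref{thm-1} applied to $C(\Omega)^\perp$, and then invoke Theorem~\ref{GC-shi} in the backward direction on $\Omega'$. First, I would fix the matrix $H=[{\bf h}_1 \mid \cdots \mid {\bf h}_n]$ whose columns are a complete set of projective representatives of $\Omega$, so $H$ is a parity-check matrix of $C(\Omega)$ and hence a generator matrix of $C(\Omega)^\perp$. The key structural claim is that
\[
\{({\bf h}_i, b) : 1 \le i \le n,\ b \in \langle\theta^j\rangle\}
\]
is a complete set of projective representatives of $\Omega'$. Coverage holds because $\langle\theta^j\rangle$ is an ideal, so multiplying $({\bf a},b)\in \Omega'$ by a unit keeps the second coordinate in $\langle\theta^j\rangle$. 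No collisions occur because each ${\bf h}_i \in \overline{R^k}$ is regular and has a unit coordinate, forcing any stabilising unit to equal $1$. A count confirms $|\Omega'|/|R^\times| = q^{m-j}n$, matching the size of this set.

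Consequently, the natural parity-check matrix of $C(\Omega')$ can be arranged in the block form
\[
H' = \begin{bmatrix} H & H & \cdots & H \\ {\bf 0} & b_1{\bf 1} & \cdots & b_{q^{m-j}-1}{\bf 1} \end{bmatrix},
\]
where $\langle\theta^j\rangle=\{0,b_1,\ldots,b_{q^{m-j}-1}\}$ and ${\bf 1}$ is the all-ones vector of length $n$. This is precisely the matrix $G_0$ of Theorem~\ref{thm-1} with $G=H$ and $m_0=j$. Hence $C(\Omega')^\perp$, as the row span of $H'$, is the code $C_0$ constructed from $C(\Omega)^\perp$ via that theorem. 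Parts (3) and (4) of Theorem~\ref{thm-1} then give that $C(\Omega')^\perp$ is regular and projective. Since the middle weight $\omega_2=n(q^{m-1}-q^{m-2})$ of $C(\Omega)^\perp$ equals the distinguished Plotkin value, case (1) of Theorem~\ref{thm-1} applies with $M=3$, so $C(\Omega')^\perp$ has exactly three nonzero homogeneous weights $q^{m-j}\omega_1 < q^{m-j}\omega_2 < q^{m-j}\omega_3$ and length $n':=q^{m-j}n$.

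To finish, I would verify
\[
q^{m-j}\omega_1+q^{m-j}\omega_2+q^{m-j}\omega_3 = q^{m-j}\cdot 3n(q^{m-1}-q^{m-2}) = 3n'(q^{m-1}-q^{m-2}),
\]
using the hypothesis $\omega_1+\omega_3=2n(q^{m-1}-q^{m-2})$ combined with $\omega_2=n(q^{m-1}-q^{m-2})$. Theorem~\ref{GC-shi} then certifies that $\Omega'$ is a TSS, and the general characterisation cited just before Theorem~\ref{GC-shi} (from \cite[Theorem 2]{Shi-GC}) gives that $\Gamma(C(\Omega'))$ is a $3$-SWRG. The main obstacle I anticipate is the bookkeeping for the projective representatives of $\Omega'$ — specifically, proving that no column collisions occur in $H'$ and that every $R^\times$-orbit of $\Omega'$ is listed exactly once — since this underpins the identification of $C(\Omega')^\perp$ as the $C_0$ of Theorem~\ref{thm-1}. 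Once that identification is in hand, the rest is a clean substitution into the two cited theorems.
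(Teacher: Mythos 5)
Your proposal is correct and follows essentially the same route as the paper: identify $C(\Omega')^\perp$ with the code $C_0$ obtained from $C(\Omega)^\perp$ via Theorem~\ref{thm-1} (case (1), since $\omega_2$ equals the distinguished value $n(q^{m-1}-q^{m-2})$), check that the scaled weights sum to $3n'(q^{m-1}-q^{m-2})$, and conclude with Theorem~\ref{GC-shi} and \cite[Theorem 2]{Shi-GC}. The paper's proof is just a terser version that leaves the identification of the parity-check matrix of $C(\Omega')$ with the block matrix $G_0$ implicit, whereas you verify it explicitly.
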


\begin{proof}
By Theorem \ref{thm-1}, $C(\Omega')^\perp$ is a regular projective linear code of length $n'=q^{m-j}n$ with three nonzero homogeneous weights $\omega'_1<\omega'_2<\omega'_3$ and $\omega_i'=q^{m-j}\omega_i$ for $1\leq i\leq 3$. Hence $\omega_1'+\omega_2'+\omega_3'=3n'(q^{m-1}-q^{m-2})$. It follows from Theorem \ref{GC-shi} that $\Omega'$ is a TSS. By \cite[Theorem 2]{Shi-GC}, $\Gamma(C(\Omega'))$ is a $3$-SWRG.
\end{proof}

\begin{remark}
By Theorem \ref{thm-3-TSS}, we construct an family of TSSs or 3-SWRGs from a known three-weight regular projective code over finite chain rings whose weights satisfy a certain equation. There were many classification of three-weights regular projective codes of short length over finite chain rings, such as $\F_2$ \cite{DCC-Kiermaier,DCC-shi-sole}, $\F_3$ \cite{DCC-Kiermaier,DCC-shi-sole}, $\Z_4$ \cite{Shi-GC} and $\F_2+u\F_2$ with $u^2=0$ \cite{Shi-GC}.
\end{remark}

\begin{example}
Let $\Omega=\{100,300,010,030,102,302,132,312,130,310,210,230\}\subseteq \Z_4^3$, which is stable by scalar multiplication by units. Then $C(\Omega)$ is the regular and projective code of length $\frac{|\Omega|}{|\Z_4^\times|}$ with the parity-check matrix
$$H=\left[\begin{array}{c}
101112\\
010331\\
002200
\end{array}\right].$$
It can be checked that $C(\Omega)^\perp$ is a $[6;2,1]_2$ code over $\Z_4$ and has Lee weight distribution
$$[ \langle0, 1\rangle, \langle4, 6\rangle, \langle6, 16\rangle, \langle8, 9\rangle ].$$
By Theorem \ref{GC-shi}, $\Omega$ is a TSS.  Let $\Omega'=\{1000,3000,0100,0300, 1020,3020,1320,3120,1300,\\3100,2100,2300, 1002,3002,0102,0302,1022,3022,1322,3122,
1302,3102,2102,2302\}$. By Theorem \ref{thm-1}, $C(\Omega')^\perp$ is a regular and projective code of length $12$ with the parity-check matrix
$$H'=\left[\begin{array}{c|c}
101112& 101112\\
010331& 010331 \\
002200& 002200\\
000000&222222
\end{array}\right].$$
 By Theorem \ref{thm-3-TSS}, we know that $\Omega'$ is a TSS and $\Gamma(C(\Omega'))$ is a $3$-SWRG.

\end{example}

\section{Conclusion}
In this paper, we have characterized all possible parameters of Plotkin-optimal two-homogeneous weight regular projective linear codes over finite chain rings, as well as their weight distributions. We show the existence of codes with these parameters by constructing an infinite family of two-homogeneous weight codes. The parameters of their Gray images have the same weight distribution as that of the two-weight codes of type SU1 in the sense of Calderbank and Kantor \cite{CK}. Furthermore, we also have constructed three-homogeneous weight regular projective codes over finite chain rings combined with some known results. Finally, we have studied applications of our constructed codes in secret sharing schemes and graph theory.\\

\noindent{\bf Data availability:} No data was used for the research described in the paper.\\

\noindent{\bf Conflict of Interest:} The authors have no conflicts of interest to declare that are relevant to the content of this paper.\\

\noindent{\bf Acknowledgement:} This research is supported by the National Natural Science Foundation of China (12071001 and 12201170) and the Natural Science Foundation of Anhui Province (2108085QA03).\\

\end{document}